\providecommand{\Appendix}{}
\renewcommand{\Appendix}[2][?]{%
	\refstepcounter{section}%
	\vspace{\parskip}%
	{\flushright\large\bfseries\appendixname\ \thesection: #1}%
	\vspace{\baselineskip}%
}
\renewcommand{\appendix}{%
	\renewcommand{\section}{\secdef\Appendix\Appendix}%
	\renewcommand{\thesection}{\Alph{section}}%
	\setcounter{section}{0}%
}
\newcommand{\D}{\mathcal{D}}  
\newcommand{\mP}{\mathcal{P}} 
\newcommand{\mA}{\mathcal{A}}
\newcommand{\mB}{\mathcal{B}}
\newcommand{\mI}{\mathcal{I}}
\newcommand{\reward}{\mathtt{rew}}
\newcommand{\LipConst}{K_{\text{Lip}}}
\newcommand{\mPzoom}{\mP_{\mu,r}}
\newcommand{\mWzoom}{\mathcal{W}_{\mu,r}}
\newcommand{\Nadj}{N^{\text{adj}}}
\newcommand{\indicator}[1]{\ensuremath{\mathbf{1}_{\{ #1 \}}}} 
\newcommand{\problem}{contextual MAB problem}
\newcommand{\driftProblem}{drifting MAB problem}
\newcommand{\TX}{{\text{X}}}
\newcommand{\TY}{{\text{Y}}}
\newcommand{\arr}{x_{(1..T)}} 
\newcommand{\Cdbl}{\ensuremath{c_{\textsc{dbl}}}} 
\newcommand{\ZoomDim}{contextual zooming dimension}
\newcommand{\bandit}{{\tt Bandit}}
\newcommand{\contextBandit}{{\tt ContextualBandit}}
\newcommand{\naiveAlg}{uniform algorithm}
\newcommand{\ZoomAlg}{contextual zooming algorithm}
\newcommand{\EXP}[1][3]{\ensuremath{\textsc{exp{\small #1}}}}
\providecommand{\qedhere}{}
\newcommand{\TAB}{\hspace{5mm}}
\newcommand{\RELEVANT}{\mathtt{relevant}} 
\title{\LARGE\bf Contextual Bandits with Similarity Information%
\thanks{This is the full version of a conference paper in \emph{COLT 2011}.
A preliminary version of this manuscript has been posted to {\tt arxiv.org} in February 2011. An earlier version on {\tt arxiv.org}, which does not include the results in Section~\ref{sec:apps}, dates back to July 2009.
The present revision addresses various presentation issues pointed out by journal referees.}}
\author{\Large Aleksandrs Slivkins
\thanks{Microsoft Research New York, New York, NY 10011, USA.
Email:~{\tt slivkins at microsoft.com}.}}
\date{First version: February 2009\\ This revision: May 2014}
\begin{document}
\maketitle

\begin{abstract}
In a multi-armed bandit (MAB) problem, an online algorithm makes a sequence of choices. In each round it chooses from a time-invariant set of alternatives and receives the payoff associated with this alternative. While the case of small strategy sets is by now well-understood, a lot of recent work has focused on MAB problems with exponentially or infinitely large  strategy sets, where one needs to assume extra structure in order to make the problem tractable. In particular, recent literature considered information on similarity between arms.

We consider similarity information in the setting of \emph{contextual bandits}, a natural extension of the basic MAB problem where before each round an algorithm is given the \emph{context} -- a hint about the payoffs in this round. Contextual bandits are directly motivated by placing advertisements on webpages, one of the crucial problems in sponsored search. A particularly simple way to represent similarity information in the contextual bandit setting is via a \emph{similarity distance} between the context-arm pairs which bounds from above the difference between the respective expected payoffs.

Prior work on contextual bandits with similarity uses ``uniform" partitions of the similarity space, so that each context-arm pair is approximated by the closest pair in the partition. Algorithms based on ``uniform" partitions disregard the structure of the payoffs and the context arrivals, which is potentially wasteful. We present algorithms that are based on {\em adaptive} partitions, and take advantage of "benign" payoffs and context arrivals without sacrificing the worst-case performance. The central idea is to maintain a finer partition in high-payoff regions of the similarity space and in popular regions of the context space. Our results apply to several other settings, e.g. MAB with constrained temporal change~\citep{DynamicMAB-colt08} and sleeping bandits~\citep{sleeping-colt08}.

\end{abstract}

{\bf ACM Categories and subject descriptors:}
\category{F.2.2}{Analysis of Algorithms and Problem Complexity}{Nonnumerical Algorithms and Problems}
\category{F.1.2}{Computation by Abstract Devices}{Modes of Computation}[Online computation]

\terms{Algorithms,Theory}

\keywords{online learning, multi-armed bandits, contextual bandits, regret minimization, metric spaces}
\newpage

\section{Introduction}
\label{sec:intro}

In a multi-armed bandit problem (henceforth, ``multi-armed bandit" will be abbreviated as MAB), an algorithm is presented with a sequence of trials. In each round, the algorithm chooses one alternative from a set of alternatives (\emph{arms}) based on the past history, and receives the payoff associated with this alternative. The goal is to maximize the total payoff of the chosen arms. The MAB setting has been introduced in 1952 in~\cite{Robbins1952} and studied intensively since then in Operations Research, Economics and Computer Science. This setting is a clean model for the exploration-exploitation trade-off, a crucial issue in sequential decision-making under uncertainty.

One standard way to evaluate the performance of a bandit algorithm is {\it regret}, defined as the difference between the expected payoff of an optimal arm and that of the algorithm. By now the MAB problem with a small finite set of arms is quite well understood, e.g. see~\citet{Lai-Robbins-85,bandits-exp3,bandits-ucb1}. However, if the arms set is exponentially or infinitely large, the problem becomes intractable unless we make further assumptions about the problem instance. Essentially, a bandit algorithm needs to find a needle in a haystack; for each algorithm there are inputs on which it performs as badly as random guessing.

Bandit problems with large sets of arms have been an active area of investigation in the past decade (see Section~\ref{sec:related-work} for a discussion of related literature). A common theme in these works is to assume a certain \emph{structure} on payoff functions. Assumptions of this type are natural in many applications, and often lead to efficient learning algorithms \citep{Bobby-thesis}. In particular, a line of work started in~\citet{Agrawal-bandits-95} assumes that some information on similarity between arms is available.

In this paper we consider similarity information in the setting of \emph{contextual bandits}~\citep{Woodroofe79,Auer-focs00,Wang-sideMAB05,yahoo-bandits07,Langford-nips07}, a natural extension of the basic MAB problem where before each round an algorithm is given the \emph{context} -- a hint about the payoffs in this round. Contextual bandits are directly motivated by the problem of placing advertisements on webpages, one of the crucial problems in sponsored search. One can cast it as a bandit problem so that arms correspond to the possible ads, and payoffs correspond to the user clicks. Then the context consists of information about the page, and perhaps the user this page is served to. Furthermore, we assume that similarity information is available on both the context and the arms. Following the work in~\cite{Agrawal-bandits-95,Bobby-nips04,AuerOS/07,LipschitzMAB-stoc08} on the (non-contextual) bandits, a particularly simple way to represent similarity information in the contextual bandit setting is via a \emph{similarity distance} between the context-arm pairs, which gives an upper bound on the difference between the corresponding payoffs.

\xhdr{Our model: contextual bandits with similarity information.}

The contextual bandits framework is defined as follows. Let $X$ be the \emph{context set} and $Y$ be the \emph{arms set}, and let $\mP\subset X\times Y$ be the set of feasible context-arms pairs. In each round $t$, the following events happen in succession:
\begin{enumerate}
\item a context $x_t\in X$ is revealed to the algorithm,
\item the algorithm chooses an arm $y_t\in Y$ such that $(x_t,y_t) \in \mP$,
\item payoff (reward) $\pi_t\in [0,1]$ is revealed.
\end{enumerate}

\noindent The sequence of context arrivals $(x_t)_{t\in\N}$ is fixed before the first round, and does not depend on the subsequent choices of the algorithm. With \emph{stochastic payoffs}, for each pair $(x,y)\in \mP$ there is a distribution $\Pi(x,y)$ with expectation $\mu(x,y)$, so that $\pi_t$ is an independent sample from $\Pi(x_t,y_t)$. With \emph{adversarial payoffs}, this distribution can change from round to round. For simplicity, we present the subsequent definitions for the stochastic setting only, whereas the adversarial setting is fleshed out later in the paper (Section~\ref{sec:CovAlg}).

\OMIT{Here $\pi_t: X\times Y\to [0,1]$ is the \emph{payoff function} defined as an independent random sample from some fixed distribution $\Pi$ over functions $X\times Y\to [0,1]$.}

In general, the goal of a bandit algorithm is to maximize the total payoff $\sum_{t=1}^T \pi_t$, where $T$ is the \emph{time horizon}. In the contextual MAB setting, we benchmark the algorithm's performance in terms of the context-specific ``best arm". Specifically, the goal is to minimize the \emph{contextual regret}:
\begin{align*}
R(T) \triangleq \textstyle{\sum_{t=1}^T}\;
	 \mu(x_t, y_t) - \mu^*(x_t) ,
	\quad\text{where}\quad
\mu^*(x) \triangleq \textstyle{\sup_{y\in Y:\, (x,y)\in \mP}}\; \mu(x,y).
\end{align*}

The context-specific best arm is a more demanding benchmark than the best arm used in the ``standard" (context-free) definition of regret.


The similarity information is given to an algorithm as a metric space $(\mP,\D)$ which we call the \emph{similarity space}, such that the following Lipschitz condition\footnote{\label{fn:LipConst}In other words, $\mu$ is a Lipschitz-continuous function on $(X,\mP)$, with Lipschitz constant $\LipConst=1$. Assuming $\LipConst=1$ is without loss of generality (as long as $\LipConst$ is known to the algorithm), since we can re-define $\D \leftarrow \LipConst\, D$.} holds:
\begin{align}\label{eq:LipschitzD}
|\mu(x,y) - \mu(x',y')|
	&\leq \D( (x,y),\; (x',y')).
\end{align}
Without loss of generality, $\D\leq 1$. The absence of similarity information is modeled as $\D = 1$.

An instructive special case is the \emph{product similarity space}
	$ (\mP,\D) = (X\times Y, \D)$,
where $(X,\D_\TX)$ is a metric space on contexts (\emph{context space}), and $(Y,\D_\TY)$ is a metric space on arms (\emph{arms space}), and
\begin{align}\label{eq:product-space}
\D( (x,y),\; (x',y'))
    = \min(1,\; \D_\TX(x,x') + \D_\TY(y,y')).
\end{align}


\xhdr{Prior work: uniform partitions.}
\cite{Hazan-colt07} consider contextual MAB with similarity information on contexts. They suggest an algorithm that chooses a ``uniform" partition $S_\TX$ of the context space and approximates $x_t$ by the closest point in $S_\TX$, call it $x'_t$. Specifically, the algorithm creates an instance $\A(x)$ of some bandit algorithm $\A$ for each point $x\in S_\TX$, and invokes $\A(x'_t)$ in each round $t$. The granularity of the partition is adjusted to the time horizon, the context space, and the black-box regret guarantee for $\A$. Furthermore, \cite{Bobby-nips04} provides a bandit algorithm $\A$ for the adversarial MAB problem on a metric space that has a similar flavor: pick a ``uniform" partition $S_\TY$ of the arms space, and run a $k$-arm bandit algorithm such as \EXP~\cite{bandits-exp3} on the points in $S_\TY$. Again, the granularity of the partition is adjusted to the time horizon, the arms space, and the black-box regret guarantee for \EXP.

Applying these two ideas to our setting (with the product similarity space) gives a simple
algorithm which we call the \emph{\naiveAlg}. Its contextual regret, even for adversarial payoffs, is
 \begin{align}\label{eq:regret-naive} 	
 R(T) \leq O( T^{1-1/(2+d_\TX+d_\TY)}) (\log T),
\end{align}
where $d_\TX$ is the covering dimension of the context space and $d_\TY$ is that of the arms space.

\xhdr{Our contributions.}
Using ``uniform" partitions disregards the potentially benign structure of expected payoffs and context arrivals. The central topic in this paper is {\bf\em adaptive partitions} of the similarity space which are adjusted to frequently occurring contexts and high-paying arms, so that the algorithms can take advantage of the problem instances in which the expected payoffs or the context arrivals are ``benign" (``low-dimensional"), in a sense that we make precise later.

We present two main results, one for stochastic payoffs and one for adversarial payoffs. For stochastic payoffs, we provide an algorithm called \emph{contextual zooming} which ``zooms in" on the regions of the context space that correspond to frequently occurring contexts, and the regions of the arms space that correspond to high-paying arms. Unlike the algorithms in prior work, this algorithm considers the context space and the arms space \emph{jointly} -- it maintains a partition of the similarity space, rather than one partition for contexts and another for arms. We develop provable guarantees that capture the ``benign-ness" of the context arrivals and the expected payoffs. In the worst case, we match the guarantee~\refeq{eq:regret-naive} for the \naiveAlg. We obtain nearly matching lower bounds using the KL-divergence technique from~\citep{bandits-exp3,Bobby-nips04}. The lower bound is very general as it holds for every given (product) similarity space \emph{and} for every fixed value of the upper bound.

Our stochastic contextual MAB setting, and specifically the \ZoomAlg, can be fruitfully applied beyond the ad placement scenario described above and beyond MAB with similarity information per se. First, writing $x_t=t$ one can incorporate ``temporal constraints" (across time, for each arm), and combine them with ``spatial constraints" (across arms, for each time). The analysis of contextual zooming yields concrete, meaningful bounds this scenario. In particular, we recover one of the main results in~\cite{DynamicMAB-colt08}. Second, our setting subsumes the stochastic \emph{sleeping bandits} problem~\cite{sleeping-colt08}, where in each round some arms are ``asleep", i.e. not available in this round. Here contexts correspond to subsets of arms that are ``awake". Contextual zooming recovers and generalizes the corresponding result in~\cite{sleeping-colt08}. Third, following the publication of a preliminary version of this paper, contextual zooming has been applied to bandit learning-to-rank in~\cite{ZoomingRBA-icml10}.

\OMIT{
it applies to a version of the adversarial MAB problem in which an adversary is constrained to change the expected payoffs of each arm \emph{gradually}, e.g. by a small amount in each round. In fact, we can combine significant constraints across time (for each arm) and across arms (for each time). }

\OMIT{For the context-free setting, our guarantees match those in~\cite{LipschitzMAB-stoc08}.  Our algorithm and analysis extends to a more general setting where some context-arms pairs may be unfeasible, and moreover the right-hand side of~\refeq{eq:LipschitzD} is replaced by an arbitrary metric on the feasible context-arms pairs.}

\OMIT{We apply the \ZoomAlg{} to a (context-free) adversarial MAB problem in which an adversary is constrained to change the expected payoffs of each arm \emph{gradually}, e.g. by a small amount in each round. This setting is naturally modeled as a contextual MAB problem in which the $t$-th context arrival is simply $x_t=t$. Then $\mu(t,y)$ corresponds to the expected payoff of arm $y$ at time $t$, and the context metric $\D_\TX(t,t')$ provides an upper bound on the temporal change
	$|\mu(t,y)-\mu(t',y)|$.
We term it the {\bf\em \driftProblem}. Interestingly, this problem incorporates significant constraints both across time (for each arm) and across arms (for each time); to the best of our knowledge, such MAB models are quite rare in the literature.\footnote{The only other MAB model with this flavor that we are aware of, found in~\cite{Hazan-soda09}, combines linear payoffs and bounded ``total variation" (aggregate temporal change) of the cost functions.} Notable special cases of $\D_\TX$ include
	$\D_\TX(t,t') = \sigma|t-t'| $ and 	$\D_\TX(t,t') = \sigma \sqrt{|t-t'|} $,
which corresponds to, respectively, the bounded change per round and the high-probability behavior of a random walk. We derive provable guarantees for these two examples, and show that they are essentially optimal.

Interestingly, the \problem{} subsumes the stochastic \emph{sleeping bandits} problem~\cite{sleeping-colt08}, where in each round some arms are ``asleep", i.e. not available in this round. Each context arrival $x_t$ corresponds to the set of arms that are ``awake" in this round. More precisely, contexts $x$ correspond to subsets $S_x$ of arms, so that only the context-arm pairs $(x,y)$, $y\in S_x$ are feasible, and the context distance is
	$\D_X \triangleq 0$.
Moreover, the \problem{} extends the sleeping bandits setting by incorporating similarity information on arms. The \ZoomAlg{} (and its analysis) applies, and is geared to exploit this additional similarity information. In the absence of such information the algorithms essentially reduces to the ``highest awake index" algorithm in~\cite{sleeping-colt08}.
} 

\OMIT{ 
The analysis of \ZoomAlg{} carries over to the \driftProblem; contextual regret becomes \emph{dynamic regret} -- regret with respect to a benchmark which in each round plays the best arm for this round. In this setting, the quantity of interest is average dynamic regret, which is typically independent of the time horizon.
} 

\OMIT{ 
Furthermore, we consider the {\em Dynamic MAB problem}~\cite{DynamicMAB-colt08} in which the \emph{state} (the current expected payoff) of each arm undergoes an independent Brownian motion on a $[0,1]$ interval with reflecting boundaries. We treat this problem as (essentially) a special case of the \driftProblem{} such that
	$\D_\TX(t,t') = \sigma \sqrt{|t-t'|} $,
where $\sigma$ is the \emph{volatility} (speed of change) of the Brownian motion. We improve the analysis of \ZoomAlg{} to obtain guarantees that are superior to those for the algorithms in~\cite{DynamicMAB-colt08}, and provide a nearly matching lower bound.
} 

For the adversarial setting, we provide an algorithm which maintains an adaptive partition of the context space and thus takes advantage of ``benign" context arrivals. We develop provable guarantees that capture this ``benign-ness". In the worst case, the contextual regret is bounded in terms of the covering dimension of the context space, matching~\refeq{eq:regret-naive}. Our algorithm is in fact a \emph{meta-algorithm}: given an adversarial bandit algorithm \bandit, we present a contextual bandit algorithm which calls \bandit{} as a subroutine. Our setup is flexible: depending on what additional constraints are known about the adversarial payoffs, one can plug in a bandit algorithm from the prior work on the corresponding version of adversarial MAB, so that the regret bound for \bandit{} plugs into the overall regret bound.

\OMIT{Our setup allows us to leverage prior work on other adversarial MAB formulations, such as the basic $k$-arm version~\cite{bandits-exp3}, linear payoffs~\cite{McMahan-colt04,Bobby-stoc04,DaniHK-nips07,AbernethyHR-colt08,Hazan-soda09} and convex payoffs~\cite{Bobby-nips04,FlaxmanKM-soda05}.}

\xhdr{Discussion.} Adaptive partitions (of the arms space) for context-free MAB with similarity information have been introduced in~\citep{LipschitzMAB-stoc08,xbandits-nips08}. This paper further explores the potential of the zooming technique in~\citep{LipschitzMAB-stoc08}. Specifically, contextual zooming extends this technique to adaptive partitions of the entire similarity space, which necessitates a technically different algorithm and a more delicate analysis. We obtain a clean algorithm for contextual MAB with improved (and nearly optimal) bounds. Moreover, this algorithm applies to several other, seemingly unrelated problems and unifies some results from prior work.

One alternative approach is to maintain a partition of the context space, and run a separate instance of the zooming algorithm from~\citet{LipschitzMAB-stoc08} on each set in this partition. Fleshing out this idea leads to the meta-algorithm that we present for adversarial payoffs (with \bandit{} being the zooming algorithm). This meta-algorithm is parameterized (and constrained) by a specific a priori regret bound for \bandit. Unfortunately, any a priori regret bound for zooming algorithm would be a pessimistic one, which negates its main strength -- the ability to adapt to ``benign" expected payoffs.

\xhdr{Map of the paper.}
Section~\ref{sec:related-work} is related work, and Section~\ref{sec:prelims} is Preliminaries. Contextual zooming is  presented in Section~\ref{sec:ZoomAlg}. Lower bounds are in Section~\ref{sec:LBs}. Some applications of contextual zooming are discussed in Section~\ref{sec:apps}.  The adversarial setting is treated in Section~\ref{sec:CovAlg}.

\newpage
\section{Related work}
\label{sec:related-work}


\newcounter{FnIndWork}
\addtocounter{footnote}{1}
\setcounter{FnIndWork}{\value{footnote}}
\newcommand{\IndWork}{\footnotemark[\value{FnIndWork}]}

A proper discussion of the literature on bandit problems is beyond the scope of this paper. This paper follows the line of work on regret-minimizing bandits;
a reader is encouraged to refer to~\citep{CesaBL-book,Bubeck-survey12} for background. A different (Bayesian) perspective on bandit problems can be found in \citep{Gittins-book11}.

Most relevant to this paper is the work on bandits with large sets of arms, specifically bandits with similarity information~\citep{Agrawal-bandits-95,Bobby-nips04,AuerOS/07,yahoo-bandits07,Kocsis-ecml06,Munos-uai07,LipschitzMAB-stoc08,xbandits-nips08,DichotomyMAB-soda10,Munos-ecml10}.
Another commonly assumed structure is linear or convex payoffs, e.g. ~\citep{Bobby-stoc04,FlaxmanKM-soda05,DaniHK-nips07,AbernethyHR-colt08,Hazan-soda09,bubeck-colt12}.
Linear/convex payoffs is a much stronger assumption than similarity, essentially because it  allows to make strong inferences about far-away arms. Other assumptions have been considered, e.g. 
\citep{Munos-nips08,Bubeck-colt10}. The distinction between stochastic and adversarial payoffs is orthogonal to the structural assumption (such as Lipschitz-continuity or linearity). Papers on MAB with linear/convex payoffs typically allow adversarial payoffs, whereas papers on MAB with similarity information focus on stochastic payoffs, with notable exceptions of~\citet{Bobby-nips04} and~\citet{Munos-ecml10}.\IndWork

The notion of structured adversarial payoffs in this paper is less restrictive than the one in~\citet{Munos-ecml10} (which in turn specializes the notion from linear/convex payoffs), in the sense that the Lipschitz condition is assumed on the expected payoffs rather than on realized payoffs. This is a non-trivial distinction, essentially because our notion generalizes stochastic payoffs whereas the other one does not.

\OMIT{In particular,~\citet{Munos-ecml10} achieve regret $\tilde{O}(\sqrt{d T})$ for $d$-dimensional real space, whereas (even) for stochastic payoffs there is a lower bound $\Omega(T^{1-1/(d+2)})$~\citep{Bobby-nips04,xbandits-nips08}.}

\footnotetext[\value{footnote}]{This paper is concurrent and independent work w.r.t. the preliminary publication of this paper on {\tt arxiv.org}.}

\xhdr{Contextual MAB.} In \citep{Auer-focs00} and~\citep{Reyzin-aistats11-linear}\IndWork{} payoffs are linear in context, which is a feature vector.~\citep{Woodroofe79,Wang-sideMAB05} and~\citep{Zeevi-colt10}\IndWork{} study contextual MAB with stochastic payoffs, under the name \emph{bandits with covariates}: the context is a random variable correlated with the payoffs; they consider the case of two arms, and make some additional assumptions. \cite{Lazaric-colt09}\IndWork{} consider an online labeling problem with stochastic inputs and adversarially chosen labels; inputs and hypotheses (mappings from inputs to labels) can be thought of as ``contexts" and ``arms" respectively. \emph{Bandits with experts advice} (e.g. \citet{Auer-focs00}) is the special case of contextual MAB where the context consists of experts' advice; the advice of a each expert is modeled as a distributions over arms. All these papers are not directly applicable to the present setting.

Experimental work on contextual MAB includes~\citep{yahoo-bandits07} and \citep{Langford-www10,Langford-wsdm11}.\IndWork

\cite{Pal-Bandits-aistats10}\IndWork{} consider the setting in this paper for a product similarity space and, essentially, recover the \naiveAlg{} and a lower bound that matches~\refeq{eq:regret-naive}. The same guarantee~\refeq{eq:regret-naive} can also be obtained as follows. The ``uniform partition" described above can be used to define ``experts" for a bandit-with-expert-advice algorithm such as \EXP[4]~\citep{bandits-exp3}: for each set of the partition there is an expert whose advise is simply an arbitrary arm in this set. Then the regret bound for \EXP[4]  yields~\refeq{eq:regret-naive}. Instead of \EXP[4] one could use an algorithm in~\cite{McMahan-colt09}\IndWork{} which improves over \EXP[4] if the experts are not ``too distinct"; however, it is not clear if it translates into concrete improvements over~\refeq{eq:regret-naive}.

If the context $x_t$ is time-invariant, our setting reduces to
the Lipschitz MAB problem as defined in~\citep{LipschitzMAB-stoc08}, which in turn reduces to continuum-armed bandits~\citep{Agrawal-bandits-95,Bobby-nips04,AuerOS/07} if the metric space is a real line, and to MAB with stochastic payoffs~\citep{bandits-ucb1} if the similarity information is absent.

\section{Preliminaries}
\label{sec:prelims}

\newcommand{\Npack}{N^{\mathtt{pack}}} 

We will use the notation from the Introduction. In particular, $x_t$ will denote the $t$-th \emph{context arrival}, i.e. the context that arrives in round $t$, and $y_t$ will denote the arm chosen by the algorithm in that round. We will use $\arr$ to denote the sequence of the first $T$ context arrivals $(x_1 \LDOTS x_T)$. The \emph{badness} of a point $(x,y)\in \mP$ is defined as
    $\Delta(x,y) \triangleq  \mu^*(x) - \mu(x,y)$.
The context-specific best arm is
\begin{align}\label{eq:bechmark-defn-stochastic}
y^*(x) \in \textstyle{\argmax_{y\in Y:\, (x,y)\in\mP}}\; \mu(x, y),
\end{align}
where ties are broken in an arbitrary but fixed way. To ensure that
the $\max$ in~\refeq{eq:bechmark-defn-stochastic} is attained by some $y\in Y$, we will assume that the similarity space $(\mP,\D)$ is compact.

\OMIT{ 
The similarity information is given to an algorithm as two metric spaces $(X,\D_\TX)$ and $(Y,\D_\TY)$ called, respectively, the \emph{context space} and the \emph{arms space}, such that the following Lipschitz-style conditions hold:
\begin{align}
|\mu_t(x,y) - \mu_t(x',y')|
	&\leq \D_\TX(x,x') + \D_\TY(y,y'), \label{eq:Lipschitz} \\
|\mu^*_t(x) - \mu^*_t(x')|
	&\leq \D_\TX(x,x')
	\label{eq:Lipschitz-star}
\end{align}
} 

{\bf Metric spaces.}
Covering dimension and related notions are crucial throughout this paper. Let $\mP$ be a set of points in a metric space, and fix $r>0$. An \emph{$r$-covering} of $\mP$ is a collection of subsets of $\mP$, each of diameter strictly less than $r$, that cover $\mP$. The minimal number of subsets in an $r$-covering is called the \emph{$r$-covering number} of $\mP$ and denoted $N_r(\mP)$.~\footnote{The covering number can be defined via radius-$r$ balls rather than diameter-$r$ sets. This alternative definition lacks the appealing ``robustness" property: $N_r(\mP')\leq N_r(\mP)$ for any $\mP'\subset \mP$, but (other than that) is equivalent for this paper.} The \emph{covering dimension} of $\mP$ (with multiplier $c$) is the smallest $d$ such that
    $N_r(\mP) \leq c\, r^{-d}$
for each $r>0$. In particular, if $S$ is a subset of Euclidean space then its covering dimension is at most the linear dimension of $S$, but can be (much) smaller.

\OMIT{The following ``robustness" property holds: $N_r(\mP')\leq N_r(\mP)$ for any $\mP'\subset \mP$, and similarly for the covering dimension.}

Covering is closely related to packing. A subset $S\subset \mP$ is an \emph{$r$-packing} of $\mP$ if the distance between any two points in $S$ is at least $r$. The maximal number of points in an $r$-packing is called the \emph{$r$-packing number} and denoted $\Npack_r(\mP)$. It is well-known that $r$-packing numbers are essentially the same as $r$-covering numbers, namely
    $N_{2r}(\mP) \leq \Npack_r(\mP) \leq N_r(\mP)$.

\OMIT{The \emph{doubling constant} of $\mP$, denoted $\Cdbl(\mP)$, is the smallest constant $c>0$ such that
    $N_{\delta\,r}(S) \leq c^{-\log \delta}$
for any $\delta\in(0,1)$ and any subset $S\subset \mP$ of diameter $r$.
The covering dimension (with multiplier $1$) is at most $\log \Cdbl$, but can be much smaller in general. The following fact is well-known: if distance between any two points in $S$ is $>r$, then any ball of radius $r$ contains at most $\Cdbl^2$ points of $S$.}

The \emph{doubling constant} $\Cdbl(\mP)$ of $\mP$ is the smallest $k$ such that any ball can be covered by $k$ balls of half the radius. The doubling constant (and \emph{doubling dimension} $\log \Cdbl$) was introduced in~\cite{Heinonen01} and has been a standard notion in theoretical computer science literature since~\citet{Gup03}. It was used to characterize tractable problem instances for a variety of problems~\citep[e.g. see][]{Tal04,Slivkins-focs04,Cole-stoc06}. It is known that $\Cdbl(\mP)\geq c\,2^d$ if $d$ is the covering dimension of $\mP$ with multiplier $c$, and that $\Cdbl(\mP)\leq 2^d$ if $\mP$ is a bounded subset of $d$-dimensional Euclidean space. A useful observation is that if distance between any two points in $S$ is $>r$, then any ball of radius $r$ contains at most $\Cdbl$ points of $S$.

A ball with center $x$ and radius $r$ is denoted $B(x,r)$. Formally, we will treat a ball as a (center, radius) pair rather than a set of points. A function $f:\mP\to \R$ if a Lipschitz function on a metric space $(\mP,\D)$, with Lipschitz constant $\LipConst$, if the \emph{Lipschitz condition} holds:
	$|f(x)-f(x')|\leq \LipConst\,\D(x,x')$
for each $x,x'\in \mP$.

{\bf Accessing the similarity space.}
We assume full and computationally unrestricted access to the similarity information. While the issues of efficient representation thereof are important in practice, we believe that a proper treatment of these issues would be specific to the particular application and the particular similarity metric used, and would obscure the present paper. One clean formal way to address this issue is to assume \emph{oracle access}: an algorithm accesses the similarity space via a few specific types of queries, and invokes an ``oracle" that answers such queries.

{\bf Time horizon.} We assume that the time horizon is fixed and known in advance. This assumption is without loss of generality in our setting. This is due to the well-known \emph{doubling trick} which converts a bandit algorithm with a fixed time horizon into one that runs indefinitely and achieves essentially the same regret bound. Suppose for any fixed time horizon $T$ there is an
algorithm $\mathtt{ALG}_T$ whose regret is at most $R(T)$. The new algorithm proceeds in phases $i= 1,2,3,\ldots$ of duration $2^i$ rounds each, so that in each phase $i$ a fresh instance of $\mathtt{ALG}_{2^i}$ is run. This algorithm has regret $O(\log T) R(T)$ for each round $T$, and $O(R(T))$ in the typical case when $R(T)\geq T^{\gamma}$ for some constant $\gamma>0$.

\OMIT{An \emph{$r$-net} is the set $S$ of points in a metric space such that any two points in $S$ are at distance $> r$ from each other, and each point in the metric space is within distance $\leq r$ from some point in $S$. }

\OMIT{
The algorithms in this paper are formulated for a fixed time horizon, i.e. they provide a regret bound for one fixed time $T$ which is given as a parameter. There is a simple way to transform a fixed time horizon algorithm $\A(T)$ with regret bound of the form
    $R(T) \leq O(T^\gamma)$
into the an algorithm which ``works" indefinitely and has the same regret bound (up to a constant factor). Namely, the \emph{doubling trick}: partition time in phases of exponential duration; in each phase $i$ of duration $T = 2^i$ run a fresh copy of algorithm $\A(2^i)$.
}

\section{The \ZoomAlg}
\label{sec:ZoomAlg}

In this section we consider the \problem{} with stochastic payoffs. We present an algorithm for this problem, called \emph{contextual zooming}, which takes advantage of both the ``benign" context arrivals and the  ``benign" expected payoffs. The algorithm adaptively maintains a partition of the similarity space, ``zooming in" on both the ``popular" regions on the context space and the high-payoff regions of the arms space.

Contextual zooming extends the (context-free) zooming technique in~\citep{LipschitzMAB-stoc08}, which necessitates a somewhat more complicated algorithm. In particular, selection and activation rules are defined differently, there is a new notion of ``domains" and the distinction between ``pre-index" and ``index". The analysis is more delicate, both the high-probability argument in Claim~\ref{cl:contZoom-chernoff} and the subsequent argument that bounds the number of samples from suboptimal arms. Also, the key step of setting up the regret bounds is very different, especially for the improved regret bounds in Section~\ref{subsec:stochastic-improved}.

\subsection{Provable guarantees}
\label{sec:stochastic-guarantees}

Let us define the notions that express the performance of contextual zooming. These notions rely on the packing number $N_r(\cdot)$ in the similarity space $(\mP,\D)$, and the more refined versions thereof that take into account ``benign" expected payoffs and ``benign" context arrivals.

Our guarantees have the following form, for some integer numbers
    $\{ N_r \}_{r\in (0,1)}$:
\begin{align}\label{eq:regret-covNum}
R(T) 	\leq C_0\; \textstyle{\inf_{r_0\in (0,1)}}
		\left(
			r_0 T	+
            \textstyle{\sum_{r=2^{-i}:\; i\in\N,\,r_0\leq r\leq 1}}\;\;
                \tfrac{1}{r}\,N_r \log T
		\right).
\end{align}
Here and thereafter, $C_0=O(1)$ unless specified otherwise. In the pessimistic version, $N_r = N_r(\mP)$ is the $r$-packing number of $\mP$.
\footnote{Then~\refeq{eq:regret-covNum} can be simplified to
    $R(T)\leq \textstyle{\inf_{r\in (0,1)}}\,
    		O\left( r T	+ \tfrac{1}{r}\,N_r(\mP) \log T \right)$
since $N_r(\mP)$ is non-increasing in $r$.}
The main contribution is refined bounds in which $N_r$ is smaller.

For every guarantee of the form~\refeq{eq:regret-covNum}, call it \emph{$N_r$-type} guarantee, prior work (e.g.,~\cite{Bobby-nips04,LipschitzMAB-stoc08,xbandits-nips08}) suggests a more tractable \emph{dimension-type} guarantee. This guarantee is in terms of the \emph{covering-type dimension} induced by $N_r$, defined as follows:\footnote{One standard definition of the covering dimension is~\refeq{eq:def-dim} for $N_r = N_r(\mP)$ and $c=1$. Following~\cite{LipschitzMAB-stoc08}, we include an explicit dependence on $c$ in~\refeq{eq:def-dim} to obtain a more efficient regret bound (which holds for any $c$).}
\begin{align}\label{eq:def-dim}
d_c \triangleq \inf\{ d>0:\; N_r \leq c\, r^{-d} \quad \forall r\in(0,1) \}.
\end{align}
Using~\refeq{eq:regret-covNum} with
    $r_0 = T^{-1/(d_c+2)}$,
we obtain
\begin{align}\label{eq:regret-dim}
R(T) \leq O(C_0)\,(c\,T^{1-1/(2+d_c)}\, \log T)
    \qquad (\forall c>0).
\end{align}

For the pessimistic version ($N_r = N_r(\mP)$), the corresponding covering-type dimension $d_c$ is the covering dimension of the similarity space. The resulting guarantee~\refeq{eq:regret-dim} subsumes the bound~\refeq{eq:regret-naive} from prior work (because the covering dimension of a product similarity space is $d_\TX+d_\TY$), and extends this bound from product similarity spaces~\refeq{eq:product-space} to arbitrary similarity spaces.

To account for ``benign" expected payoffs, instead of $r$-packing number of the entire set $\mP$ we consider the $r$-packing number of a subset of $\mP$ which only includes points with near-optimal expected payoffs:
\begin{align}\label{eq:P-mu-r}
\mPzoom
    \triangleq \{(x,y)\in \mP:\, \mu^*(x) - \mu(x,y) \leq 12\,r\}.
\end{align}
We define the \emph{$r$-zooming number} as
    $N_r(\mPzoom)$,
the $r$-packing number of $\mPzoom$. The corresponding covering-type dimension~\refeq{eq:def-dim} is called the \emph{\ZoomDim}.

The $r$-zooming number can be seen as an optimistic version of $N_r(\mP)$: while equal to $N_r(\mP)$ in the worst case, it can be much smaller if the set of near-optimal context-arm pairs is ``small" in terms of the packing number. Likewise, the \ZoomDim{} is an optimistic version of the covering dimension.

\begin{theorem}\label{thm:ZoomAlg}
Consider the \problem{} with stochastic payoffs. There is an algorithm (namely, Algorithm~\ref{alg:contextual-zooming} described below) whose contextual regret $R(T)$ satisfies~\refeq{eq:regret-covNum} with $N_r$ equal to $N_r(\mPzoom)$, the $r$-zooming number. Consequently, $R(T)$ satisfies the dimension-type guarantee~\refeq{eq:regret-dim}, where $d_c$ is the \ZoomDim.
\end{theorem}

In Theorem~\ref{thm:ZoomAlg}, the same algorithm enjoys the bound~\refeq{eq:regret-dim} for each $c>0$. This is  a useful trade-off since different values of $c$ may result in drastically different values of the dimension $d_c$. On the contrary, the ``\naiveAlg" from prior work essentially needs to take the $c$ as input.

Further refinements to take into account ``benign" context arrivals are deferred to Section~\ref{subsec:stochastic-improved}.

\newcommand{\rad}{\mathtt{conf}} 
\newcommand{\domain}{\mathtt{dom\,}} 
\newcommand{\preindex}{I^{\text{pre}}} 
\newcommand{\parent}{B^{\text{par}}} 
\newcommand{\B}{\mathcal{B}}

\subsection{Description of the algorithm}
\label{subsec:stochastic-algorithm}

The algorithm is parameterized by the time horizon $T$. In each round $t$, it maintains a finite collection $\mA_t$ of balls in $(\mP,\D)$ (called \emph{active balls}) which collectively cover the similarity space. Adding active balls is called \emph{activating}; balls stay active once they are activated. Initially there is only one active ball which has radius $1$ and therefore contains the entire similarity space.

At a high level, each round $t$ proceeds as follows. Context $x_t$ arrives. Then the algorithm selects an active ball $B$ and an arm $y_t$ such that $(x_t,y_t)\in B$, according to the ``selection rule". Arm $y_t$ is played. Then one ball may be activated, according to the ``activation rule".

In order to state the two rules, we need to put forward several definitions. Fix an active ball $B$ and round $t$. Let $r(B)$ be the radius of $B$.  The \emph{confidence radius} of $B$ at time $t$ is
\begin{align}\label{eq:def-confRad}
\rad_t(B)
    \triangleq 4\, \sqrt{\frac{\log T}{1+n_t(B)}},
\end{align}
where $n_t(B)$ is the number of times $B$ has been selected by the algorithm before round $t$. The \emph{domain} of ball $B$ in round $t$ is a subset of $B$ that excludes all balls $B'\in \mA_t$ of strictly smaller radius:
\begin{align}\label{eq:def-domain}
\domain_t(B)
    \triangleq B \setminus
    \left( \textstyle \bigcup_{B'\in \mA_t:\, r(B')< r(B)}\; B' \right).
\end{align}
We will also denote \eqref{eq:def-domain} as $\domain(B,\mA_t)$.
Ball $B$ is called \emph{relevant} in round $t$ if
    $(x_t,y)\in \domain_t(B)$
for some arm $y$. In each round, the algorithm selects one relevant ball $B$. This ball is selected according to a numerical score $I_t(B)$ called \emph{index}. (The definition of index is deferred to the end of this subsection.)

Now we are ready to state the two rules, for every given round $t$.
\begin{itemize}
\item {\bf\em selection rule.} Select a relevant ball $B$ with the maximal index (break ties arbitrarily). Select an arbitrary arm $y$ such that
        $(x_t,y)\in \domain_t(B)$.
\item {\bf\em activation rule.} Suppose the selection rule selects a relevant ball $B$ such that
        $\rad_t(B)\leq r(B)$
    after this round.
    Then, letting $y$ be the arm selected in this round, a ball with center $(x_t, y)$ and radius $\tfrac12\, r(B)$ is activated. ($B$ is then called the \emph{parent} of this ball.)
\end{itemize}
See Algorithm~\ref{alg:contextual-zooming} for the pseudocode.

\begin{algorithm}[t]
\begin{algorithmic}[1]
\caption{Contextual zooming algorithm.}
\label{alg:contextual-zooming}

\STATE {\bf Input:}
    Similarity space $(\mP,\D)$ of diameter $\leq 1$, $\mP\subset X\times Y$.
    Time horizon $T$.
\STATE {\bf Data:}
    collection $\mA$ of ``active balls" in $(\mP,\D)$;~~
    counters $n(B)$, $\reward(B)$ for each $B\in \mA$.
\STATE {\bf Init:}
    $B\leftarrow B(p, 1)$;~~
    $\mA \leftarrow \{B\}$;~~
    $n(B)=\reward(B) = 0$
\COMMENT{center $p\in \mP$ is arbitrary}
\STATE {\bf Main loop:} for each round $t$
    \COMMENT{use definitions (\ref{eq:def-confRad}-\ref{eq:def-index})}
\STATE \TAB Input context $x_t$.
\STATE \TAB \COMMENT{activation rule}
\STATE \TAB $\RELEVANT \leftarrow
        \{ B\in \mA: (x_t,y)\in \domain(B,\mA) \text{ for some arm $y$}\} $.
\STATE \TAB $B\leftarrow \argmax_{B\in\RELEVANT}\; I_t(B)$.
    \COMMENT{ball $B$ is selected}
\STATE \TAB $y\leftarrow \text{ any arm y such that $(x_t,y)\in \domain(B,\mA)$}$.
\STATE \TAB Play arm $y$, observe payoff $\pi$.
\STATE \TAB Update counters: $n(B)\leftarrow n(B)+1$,
                $\reward(B)\leftarrow \reward(B)+\pi$.
\STATE \TAB \COMMENT{selection rule}
\STATE \TAB   {\bf if} $\rad(B)\leq \mathtt{radius}(B)$ {\bf then}
\STATE \TAB\TAB
    $B'\leftarrow B((x_t,y),\; \tfrac12\,\mathtt{radius}(B))$
    \COMMENT{new ball to be activated}
\STATE \TAB\TAB
    $\mA\leftarrow\mA \cup \{ B'\}$;~~
    $n(B')=\reward(B')=0$.
\end{algorithmic}
\end{algorithm}

It remains to define the index $I_t(B)$. Let $\reward_t(B)$ be the total payoff from all rounds up to $t-1$ in which ball $B$ has been selected by the algorithm. Then the average payoff from $B$ is
$\nu_t(B) \triangleq \tfrac{\reward_t(B)}{\max(1,\;n_t(B))}$.
The \emph{pre-index} of $B$ is defined as the average $\nu_t(B)$ plus an ``uncertainty term":
\begin{align}\label{eq:def-preindex}
\preindex_t(B)
	\triangleq \nu_t(B) + r(B) + \rad_t(B).
\end{align}

\noindent The ``uncertainty term" in~\refeq{eq:def-preindex} reflects both uncertainty due to a location in the metric space, via $r(B)$, and uncertainty due to an insufficient number of samples, via $\rad_t(B)$.

The index of $B$ is obtained by taking a minimum over all active balls $B'$:
\begin{align}\label{eq:def-index}
I_t(B)
	\triangleq r(B)+ \min_{B'\in \mA_t}
        \; \left( \preindex_t(B') + \D(B,B') \right),
\end{align}
where $\D(B,B')$ is the distance between the centers of the two balls.

\xhdr{Discussion.} The meaning of index and pre-index is as follows. Both are upper confidence bound (\emph{UCB}, for short) for expected rewards in $B$. Pre-index is a UCB for $\mu(B)$, the expected payoff from the center of $B$; essentially, it is the best UCB on $\mu(B)$ that can be obtained from the observations of $B$ alone. The $\min$ expression in \eqref{eq:def-index} is an improved UCB on $\mu(B)$, refined using observations from all other active balls. Finally, index is, essentially, the best available UCB for the expected reward of any pair $(x,y)\in B$.

Relevant balls are defined through the notion of the ``domain" to ensure the following property: in each round when a parent ball is selected, some other ball is activated. This property allows us to ``charge" the regret accumulated in each such round to the corresponding activated ball.

\xhdr{Running time.}
The running time is dominated by determining which active balls are relevant. Formally, we assume an oracle that inputs context $x$ and a finite sequence $(B, B_1 \LDOTS B_n)$ of balls in the similarity space, and outputs an arm $y$ such that
    $(x,y)\in B \setminus \cup_{j=1}^n B_j$
if such arm exists, and {\tt null} otherwise. Then each round $t$ can be implemented via $n_t$ oracle calls with $n<n_t$ balls each, where $n_t$ is the current number of active balls. Letting $f(n)$ denote the running time of one oracle call in terms of $n$, the running time for each round  the algorithm is at most $n_T\, f(n_T)$.

While implementation of the oracle and running time $f(\cdot)$ depend on the specific similarity space, we can provide some upper bounds on $n_T$. First, a crude upper bound is $n_T\leq T$. Second, letting $\F_r$ be the collection of all active balls of radius $r$, we prove that $|\F_r|$ is at most $N_r$, the $r$-zooming number of the problem instance. Third, $|\F_r|\leq \Cdbl\, Tr^2$, where $\Cdbl$ is the doubling constant of the similarity space. (This is because each active ball must be played at least $r^{-2}$ times before it becomes a parent ball, and each parent ball can have at most $\Cdbl$ children.) Putting this together, we obtain
    $n_T \leq \sum_r \min(\Cdbl\, Tr^2,\, N_r)$,
where the sum is over all $r = 2^{-j}$, $j\in \N$.


\subsection{Analysis of the algorithm: proof of Theorem~\ref{thm:ZoomAlg}}
\label{subsec:stochastic-analysis}

We start by observing that the activation rule ensures several important invariants.

\begin{claim}\label{eq:contextZoom-invariants}
The following invariants are maintained:
\begin{OneLiners}
\item (confidence)
for all times $t$ and all active balls $B$,
\begin{align*}
   \text{ $\rad_t(B)\leq r(B)$ if and only if $B$ is a parent ball.}
\end{align*}
    \OMIT{$\rad_t(B)> r(B)$ for all active balls $B$ and all rounds $t$.}
\item (covering) in each round $t$, the domains of active balls cover the similarity space.

\item (separation) for any two active balls of radius $r$, their centers are at distance at least $r$.
\end{OneLiners}
\end{claim}

\begin{proof}
The confidence invariant is immediate from the activation rule.

For the covering invariant, note that
$\cup_{B\in \mA}\, \domain(B,\mA) = \cup_{B\in \mA}\, B$
for any finite collection $\mA$ of balls in the similarity space. (For each $v\in \cup_{B\in \mA}\,B$, consider a  smallest radius ball in $\mA$ that contains $B$. Then $v\in \domain(B,\mA)$.) The covering invariant then follows since $\mA_t$ contains a ball that covers the entire similarity space.

To show the separation invariant, let $B$ and $B'$ be two balls of radius $r$ such that $B$ is activated at time $t$, with parent $\parent$, and $B'$ is activated before time $t$. The center of $B$ is some point
    $(x_t,y_t) \in \domain(\parent,\mA_t)$.
Since $r(\parent)>r(B')$, it follows that
    $(x_t,y_t) \not\in B'$.
\end{proof}


\newcommand{\Bsel}{B_t^{\mathtt{sel}}}

Throughout the analysis we will use the following notation.
For a ball $B$ with center $(x,y)\in \mP$, define the expected payoff of $B$ as
    $\mu(B) \triangleq \mu(x,y)$.
Let $\Bsel$ be the active ball selected by the algorithm in round $t$. Recall that the \emph{badness} of $(x,y)\in \mP$ is defined as
    $\Delta(x,y) \triangleq  \mu^*(x) - \mu(x,y)$.

\begin{claim}\label{cl:contZoom-chernoff}
If ball $B$ is active in round $t$, then with probability at least $1-T^{-2}$ we have that
\begin{align}\label{eq:chernoff-contextual}
    |\nu_t(B) - \mu(B)| \leq r(B)+ \rad_t(B).
\end{align}

\end{claim}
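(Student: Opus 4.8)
The statement is a Chernoff--Hoeffding concentration bound for an \emph{adaptively} sampled sequence, combined with the Lipschitz condition~\refeq{eq:LipschitzD} to relate the observed payoffs to $\mu(B)$. Fix the ball $B$, with center $(x_B,y_B)$, so that $\mu(B)=\mu(x_B,y_B)$. For $j\ge 1$, let $\tau_j$ be the round in which $B$ is selected for the $j$-th time (a stopping time; if $B$ is selected fewer than $j$ times, set $\tau_j=\infty$ and freeze the associated quantities to arbitrary fixed values), let $y^{(j)}$ be the arm played in round $\tau_j$, and let $X_j\triangleq\pi_{\tau_j}(x_{\tau_j},y^{(j)})\in[0,1]$. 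By the selection rule $(x_{\tau_j},y^{(j)})\in\domain_{\tau_j}(B)\subseteq B$, so this point is within distance $r(B)$ of $(x_B,y_B)$, and hence $|\mu(x_{\tau_j},y^{(j)})-\mu(B)|\le r(B)$ by~\refeq{eq:LipschitzD}.

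Next I would set up a martingale. Let $\mathcal F_{j-1}$ be the $\sigma$-algebra of all events up to (but not including) the realization of $\pi_{\tau_j}$; note that $\tau_j$, $x_{\tau_j}$ and $y^{(j)}$ are $\mathcal F_{j-1}$-measurable. Since in the stochastic model $\pi_t$ is an independent sample from $\Pi$, independent of the algorithm's past, $\E[X_j\mid\mathcal F_{j-1}]=\mu(x_{\tau_j},y^{(j)})$, which by the inequality above lies within $r(B)$ of $\mu(B)$; in particular, any partial average of these conditional means is within $r(B)$ of $\mu(B)$. Thus $Z_j\triangleq X_j-\E[X_j\mid\mathcal F_{j-1}]$ is a martingale difference sequence with $|Z_j|\le 1$. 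A standard application of the Azuma--Hoeffding inequality, followed by a union bound over the at most $T$ possible values of $n$ (and using that for $n=O(\log T)$ the claimed deviation $4\sqrt{\log T/(1+n)}$ already exceeds the deterministic bound $1$), shows that with probability at least $1-T^{-3}$ the following holds for every $n\in\{1,\dots,T\}$ simultaneously: the average of $X_1,\dots,X_n$ differs from the average of $\E[X_1\mid\mathcal F_0],\dots,\E[X_n\mid\mathcal F_{n-1}]$ by at most $4\sqrt{\log T/(1+n)}$.

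On this event, take $n=n_t(B)$: the first average is exactly $\nu_t(B)$, the radius is exactly $\rad_t(B)$, and the triangle inequality together with the second paragraph gives $|\nu_t(B)-\mu(B)|\le \rad_t(B)+r(B)$, as claimed. (When $n_t(B)=0$ the statement is trivial, since then $\nu_t(B)=0$ and $\rad_t(B)=4\sqrt{\log T}\ge 1\ge|\mu(B)|$.) The only genuine subtlety --- and the reason a plain Chernoff bound does not suffice --- is that both the sample count $n_t(B)$ and the rounds $\tau_j$ at which samples are collected are chosen adaptively by the algorithm; this is precisely what the martingale formulation, together with the union bound over the possible values of $n_t(B)$, is designed to handle. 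The remaining steps are routine bookkeeping.
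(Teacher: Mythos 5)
Your proposal is correct and follows essentially the same route as the paper's proof: a martingale over the rounds in which $B$ is selected, the Azuma--Hoeffding inequality to control the centered sum, and the Lipschitz condition to bound $|\mu(x_s,y_s)-\mu(B)|\le r(B)$ for each such round. Your write-up is in fact more careful than the paper's (explicit filtration, union bound over the possible values of $n_t(B)$, and the $n_t(B)=0$ case), but these are details the paper leaves implicit rather than a different argument.
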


\begin{proof}
Fix ball $V$ with center $(x,y)$. Let $S$ be the set of rounds $s\leq t$ when ball $B$ was selected by the algorithm, and let $n = |S|$ be the number of such rounds. Then
$ \nu_t(B) = \tfrac{1}{n}\, \textstyle{\sum_{s\in S}}\;
                    \pi_s(x_s, y_s)
$.

Define
$Z_k = \sum \left( \pi_s(x_s, y_s) - \mu(x_s, y_s) \right)$,
where the sum is taken over the $k$ smallest elements $s\in S$. Then
    $\{Z_{k\wedge n}\}_{k\in \N}$
is a martingale with bounded increments. (Note that $n$ here is a random variable.) So by the Azuma-Hoeffding inequality with probability at least $1-T^{-3}$ it holds that
	$\tfrac{1}{k}\,| Z_{k\wedge n}| \leq \rad_t(B)$,
for each $k\leq T$. Taking the Union Bound, it follows that
	$\tfrac{1}{n}\,| Z_n| \leq \rad_t(B)$.
Note that
	$|\mu(x_s, y_s)- \mu(B)| \leq r(B)$
for each $s\in S$, so
	$|\nu_t(B) - \mu(B)| \leq r(B) + \tfrac{1}{n}\,| Z_n|  $,
which completes the proof.
\end{proof}

Note that~\refeq{eq:chernoff-contextual} implies
    $\preindex(B)\geq \mu(B)$,
so that $\preindex(B)$ is indeed a UCB on $\mu(B)$.

Call a run of the algorithm \emph{clean} if~\refeq{eq:chernoff-contextual} holds for each round.
From now on we will focus on a clean run, and argue deterministically using~\refeq{eq:chernoff-contextual}. The heart of the analysis is the following lemma.

\begin{lemma}\label{lm:ZoomAlg-crux}
Consider a clean run of the algorithm. Then
	$\Delta(x_t,y_t) \leq 14\,r(\Bsel)$
in each round $t$.
\end{lemma}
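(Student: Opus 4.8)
<br>

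The plan is to control the \emph{index} $I_t(\Bsel)$ of the ball $\Bsel$ selected in round $t$. I will show that on any clean run
\[
\mu^*(x_t)\;\le\;I_t(\Bsel)\;\le\;\mu(x_t,y_t)+15\,r(\Bsel),
\]
from which the lemma is immediate since $\Delta(x_t,y_t)=\mu^*(x_t)-\mu(x_t,y_t)$. The only tools used are the Lipschitz condition~\refeq{eq:LipschitzD} (recall $\D(B,B')$ is the distance between the centers of $B$ and $B'$), the clean-run bound~\refeq{eq:chernoff-contextual}, and the two invariants of Claim~\ref{eq:contextZoom-invariants}; no new high-probability event is introduced.

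For the lower bound, I would pick a point $(x_t,y^*)\in\PP$ witnessing $\mu^*(x_t)$ (attained by compactness of $\PP$). By the covering invariant it lies in $\domain_t(B^*)$ for some active ball $B^*$, which is therefore relevant in round $t$, so the selection rule gives $I_t(\Bsel)\ge I_t(B^*)$. To bound $I_t(B^*)$ from below I examine each admissible ball $B'$ in the minimum~\refeq{eq:index}: the clean-run bound gives $\nu_t(B')\ge\mu(B')-r(B')-\rad_t(B')$, hence $\preindex_t(B')\ge\mu(B')+r(B')$; adding $\D(B^*,B')$ and using~\refeq{eq:LipschitzD} on the two centers yields $\preindex_t(B')+\D(B^*,B')\ge\mu(B^*)+r(B')\ge\mu(B^*)+r(B^*)\ge\mu^*(x_t)$, the last step because the center of $B^*$ is within $r(B^*)$ of $(x_t,y^*)$. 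Minimizing over $B'$ gives $I_t(B^*)\ge\mu^*(x_t)$, hence $I_t(\Bsel)\ge\mu^*(x_t)$.

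The upper bound is the crux, and this is where I expect the main obstacle. The naive attempt — plugging $\Bsel$ itself into~\refeq{eq:index} — fails, because it produces the term $\rad_t(\Bsel)$, which can be much larger than $r(\Bsel)$ when $\Bsel$ was only recently activated. The fix is to route the estimate through the \emph{parent} $\parent$ of $\Bsel$ (if $\Bsel$ is the root ball then $r(\Bsel)=1\ge\Delta(x_t,y_t)$ trivially). Write $r=r(\Bsel)$, so $r(\parent)=2r$, and let $t_0<t$ be the round in which $\parent$ was selected and $\Bsel$ activated. Three elementary facts make $\parent$ a good candidate in~\refeq{eq:index} for $\Bsel$: (i) $r(\parent)=2r\ge r(\Bsel)$ and $\parent$ was activated before $t_0<t$, so it is admissible; (ii) the center of $\Bsel$ equals $(x_{t_0},y_{t_0})\in\domain_{t_0}(\parent)\subseteq\parent$, so $\D(\Bsel,\parent)\le 2r$; (iii) the activation rule fired at $t_0$ only because $\rad_{t_0}(\parent)\le r(\parent)=2r$, and confidence radii are non-increasing in $t$, so $\rad_t(\parent)\le 2r$. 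Then, using~\refeq{eq:index}, then the clean-run bound for $\parent$, then~\refeq{eq:LipschitzD} for $\mu(\parent)\le\mu(\Bsel)+2r\le\mu(x_t,y_t)+3r$ (since $(x_t,y_t)\in\domain_t(\Bsel)\subseteq\Bsel$):
\[
I_t(\Bsel)\le\preindex_t(\parent)+\D(\Bsel,\parent)\le\bigl(\nu_t(\parent)+4r+2r\bigr)+2r\le\bigl(\mu(\parent)+2r+2r\bigr)+8r=\mu(\parent)+12r\le\mu(x_t,y_t)+15\,r.
\]

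Combining with the lower bound gives $\mu^*(x_t)\le\mu(x_t,y_t)+15\,r(\Bsel)$, i.e. $\Delta(x_t,y_t)\le15\,r(\Bsel)$. The subtle point is entirely in this upper bound — recognizing that the selected ball's own confidence radius is uninformative while the parent's is controlled by the activation rule, and then tuning the Lipschitz bookkeeping so that the constants sum to exactly $15$.
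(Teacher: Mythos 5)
Your proof is correct and follows essentially the same route as the paper's: lower-bound $I_t(\Bsel)$ via the covering invariant and the clean-run estimate applied to the minimizer $B'$ in~\refeq{eq:index}, and upper-bound it by plugging the parent $\parent$ into~\refeq{eq:index}, using $\rad_t(\parent)\le r(\parent)=2\,r(\Bsel)$ from the activation rule, with identical constant bookkeeping summing to $15$. The only differences are cosmetic — you spell out a few details the paper leaves implicit (domains covering the space, monotonicity of the confidence radius, the root-ball case).
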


\begin{proof}
Fix round $t$. By the covering invariant,
    $(x_t,\, y^*(x_t)) \in B$
for some active ball $B$. Recall from~\refeq{eq:def-index} that
	$I_t(B) = r(B) + \preindex(B') + \D(B,B')$
for some active ball $B'$.
Therefore
\begin{align}
I_t(\Bsel)
	&\geq I_t(B)
	=  \preindex(B') + r(B)+ \D(B,B')
		&\text{(selection rule, defn of index~\refeq{eq:def-index})} \nonumber\\
	&\geq \mu(B') + r(B) + \D(B,B')
		&\text{(``clean run")} \nonumber\\
	&\geq \mu(B) + r(B)
	\geq \mu(x_t,\, y^*(x_t)) =  \mu^*(x_t).
		&\text{(Lipschitz property~\refeq{eq:LipschitzD}, twice)}
	\label{eq:zoomAlg-crux-LB}
\end{align}
On the other hand, letting $\parent$ be the parent of $\Bsel$ and noting that by the activation rule
\begin{align}\label{eq:zoomAlg-parent}
\max(\D(\Bsel,\parent),\; \rad_t(\parent)) \leq r(\parent),
\end{align}
we can upper-bound $I_t(\Bsel)$ as follows:
\begin{align}
\preindex(\parent)
    &= \nu_t(\parent) + r(\parent) + \rad_t(\parent)
		&\text{(defn of preindex~\refeq{eq:def-preindex})} \nonumber\\	
    &\leq \mu(\parent) + 2\,r(\parent) + 2\,\rad_t(\parent)
        &\text{(``clean run")}\nonumber\\
    &\leq \mu(\parent) + 4\, r(\parent)
    			&\text{(``parenthood" \refeq{eq:zoomAlg-parent})} \nonumber \\
    &\leq \mu(\Bsel) + 5\, r(\parent)
        &\text{(Lipschitz property~\refeq{eq:LipschitzD})}
    		\label{eq:zoomAlg-crux-UB-1}\\
I_t(\Bsel)
	&\leq r(\Bsel) + \preindex(\parent)  + \D(\Bsel,\parent)
		&\text{(defn of index~\refeq{eq:def-index})} \nonumber\\
	&\leq r(\Bsel) + \preindex(\parent)  + r(\parent)
			&\text{(``parenthood" \refeq{eq:zoomAlg-parent})} \nonumber\\
	&\leq r(\Bsel) + \mu(\Bsel) + 6\,r(\parent)
		&\text{(by \refeq{eq:zoomAlg-crux-UB-1})}\nonumber\\
	&\leq \mu(\Bsel) + 13\,r(\Bsel)
		&\text{($r(\parent) = 2\,r(\Bsel)$)}\nonumber\\
	&\leq \mu(x_t,y_t) + 14\,r(\Bsel)
		&\text{(Lipschitz property~\refeq{eq:LipschitzD}).}
		\label{eq:zoomAlg-crux-UB}
\end{align}
Putting the pieces together,
	$\mu^*(x_t) \leq I_t(\Bsel) \leq \mu(x_t,y_t) + 14\,r(\Bsel)$.
\end{proof}

\begin{corollary}\label{cor:ZoomAlg-crux}
In a clean run, if ball $B$ is activated in round $t$ then
	$\Delta(x_t,y_t) \leq 10\,r(B)$.
\end{corollary}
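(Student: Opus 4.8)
The plan is to re-run the argument of Lemma~\ref{lm:ZoomAlg-crux}, but to exploit the fact that when $B$ is activated in round $t$, the selected ball $\Bsel$ (which is the parent of $B$) is ``well-sampled'': by the activation rule it satisfies $\rad_t(\Bsel)\leq r(\Bsel)$, and moreover $r(\Bsel)=2\,r(B)$ and $(x_t,y_t)$ is the center of $B$, hence $(x_t,y_t)\in\domain_t(\Bsel)\subseteq \Bsel$. The lower bound on the index of the selected ball, $\mu^*(x_t)\leq I_t(\Bsel)$, is exactly inequality~\refeq{eq:zoomAlg-crux-LB} from the proof of Lemma~\ref{lm:ZoomAlg-crux}; its derivation uses only the covering invariant, the selection rule, the definition of the index~\refeq{eq:index}, Claim~\ref{cl:contZoom-chernoff}, and the Lipschitz property~\refeq{eq:LipschitzD}, none of which require $t$ to be an activation round, so it carries over verbatim.

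The improvement comes in the upper bound: rather than detouring through the parent of $\Bsel$ as in~\refeq{eq:zoomAlg-crux-UB} (which is what produces the factor $15$), I would bound $I_t(\Bsel)$ directly by $\preindex_t(\Bsel)$ — legitimate since $\Bsel$ itself is one of the balls $B'$ of radius $\geq r(\Bsel)$ appearing in the minimum~\refeq{eq:index}, with $\D(\Bsel,\Bsel)=0$. Expanding~\refeq{eq:preindex} gives $I_t(\Bsel)\leq \nu_t(\Bsel)+2\,r(\Bsel)+\rad_t(\Bsel)$, and Claim~\ref{cl:contZoom-chernoff} (valid since the run is clean and $\Bsel$ is active in round $t$) yields $I_t(\Bsel)\leq \mu(\Bsel)+3\,r(\Bsel)+2\,\rad_t(\Bsel)$. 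Now the activation hypothesis $\rad_t(\Bsel)\leq r(\Bsel)$ collapses this to $\mu(\Bsel)+5\,r(\Bsel)$, and one application of the Lipschitz property~\refeq{eq:LipschitzD} using $(x_t,y_t)\in \Bsel$ gives $I_t(\Bsel)\leq \mu(x_t,y_t)+6\,r(\Bsel)=\mu(x_t,y_t)+12\,r(B)$. Chaining with $\mu^*(x_t)\leq I_t(\Bsel)$ yields $\Delta(x_t,y_t)=\mu^*(x_t)-\mu(x_t,y_t)\leq 12\,r(B)$.

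I do not expect a genuine obstacle: the one thing to get right is that the hypothesis ``$B$ is activated in round $t$'' is precisely what supplies $\rad_t(\Bsel)\leq r(\Bsel)$, and that this is exactly what lets us skip the lossy step through the parent ball; after that the proof is just tracking constants ($3+2=5$, plus $1$ for the Lipschitz step, times $r(\Bsel)=2\,r(B)$, giving $12$).
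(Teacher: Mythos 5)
Your proof is correct and follows essentially the same route as the paper's: both bound $I_t(\Bsel)$ directly by $\preindex_t(\Bsel)$ instead of detouring through the parent of $\Bsel$, and both use the activation-rule fact $\rad_t(\Bsel)\leq r(\Bsel)$ to obtain $\Delta(x_t,y_t)\leq 6\,r(\Bsel)=12\,r(B)$. The constant-tracking matches the paper's exactly.
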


\begin{proof}
By the activation rule, $\Bsel$ is the parent of $B$. Thus by Lemma~\ref{lm:ZoomAlg-crux} we immediately have
	$\Delta(x_t,y_t)\leq 14\,r(\Bsel) =28\, r(B)$.

To obtain the constant of 10 that is claimed here, we prove a more efficient special case of Lemma~\ref{lm:ZoomAlg-crux}:
\begin{align}\label{eq:lm:ZoomAlg-crux:more}
\text{if $\Bsel$ is a parent ball then
		$\Delta(x_t,y_t) \leq 5\,r(\Bsel)$.}
\end{align}

To prove~\refeq{eq:lm:ZoomAlg-crux:more}, we simply replace~\refeq{eq:zoomAlg-crux-UB} in the proof of Lemma~\ref{lm:ZoomAlg-crux} by similar inequality in terms of $\preindex(\Bsel)$ rather than $\preindex(\parent)$:
\begin{align*}
I_t(\Bsel)
	&\leq r(\Bsel) + \preindex(\Bsel)
			 &\text{(defn of index~\refeq{eq:def-index})} \\
	&= \nu_t(\Bsel) + 2\, r(\Bsel) + \rad_t(\Bsel)
			 &\text{(defns of pre-index~\refeq{eq:def-preindex})} \\
	&\leq \mu(\Bsel) + 3\,r(\Bsel) + 2\, \rad_t(\Bsel)
		&\text{(``clean run")} \\
	&\leq \mu(x_t,y_t) + 5\,r(\Bsel)
\end{align*}
For the last inequality, we use the fact that
    $\rad_t(\Bsel)\leq r(\Bsel)$
whenever $\Bsel$ is a parent ball.
\end{proof}

Now we are ready for the final regret computation. For a given $r=2^{-i}$, $i\in\N$, let $\F_r$ be the collection of all balls of radius $r$ that have been activated throughout the execution of the algorithm.
Note that in each round, if a parent ball is selected then some other ball is activated.  Thus, we can partition the rounds among active balls as follows: for each ball $B\in \F_r$, let $S_B$ be the set of rounds which consists of the round when $B$ was activated and all rounds $t$ when $B$ was selected and was not a parent ball.%
\footnote{A given ball $B$ can be selected even after it becomes a parent ball, but in such round some other ball $B’$ is activated, so this round is included in $S_{B'}$. }
It is easy to see that
	$|S_B| \leq O(r^{-2}\, \log T)$.
Moreover, by Lemma~\ref{lm:ZoomAlg-crux} and Corollary~\ref{cor:ZoomAlg-crux}
we have
	$\Delta(x_t,y_t) \leq 15\,r$
in each round $t\in S_B$.

If ball $B\in \F_r$ is activated in round $t$, then Corollary~\ref{cor:ZoomAlg-crux} asserts that its center $(x_t,y_t)$ lies in the set $\mPzoom$, as defined in~\refeq{eq:P-mu-r}. By the separation invariant, the centers of balls in $\F_r$  are within distance at least $r$ from one another. It follows that
    $|\F_r| \leq N_r$,
where $N_r$ is the $r$-zooming number.

Fixing some $r_0\in (0,1)$, note that in each rounds $t$ when a ball of radius $<r_0$ was selected, regret is $\Delta(x_t,y_t)\leq O(r_0)$, so the total regret from all such rounds is at most $O(r_0\, T)$. Therefore, contextual regret can be written as follows:
\begin{align*}
R(T)
	&= \textstyle{\sum_{t=1}^T}\, \Delta(x_t, y_t) \\
	&= O(r_0\, T) +
        \textstyle{
            \sum_{r=2^{-i}:\; r_0\leq r\leq 1}\;
            \sum_{B\in\F_r} \sum_{t\in S_B}
        }\;
			\Delta(x_t, y_t) \\
	&\leq O(r_0\, T) +
		\textstyle{
            \sum_{r=2^{-i}:\; r_0\leq r\leq 1}\;
            \sum_{B\in\F_r}
        }\;
        |S_B|\, O(r) \\
	&\leq O\left(
		r_0 T \,+\,
        \textstyle{
            \sum_{r=2^{-i}:\; r_0\leq r\leq 1}
        }\;
        \tfrac{1}{r}\, N_r \log (T)
	\right).
\end{align*}
The $N_r$-type regret guarantee in Theorem~\ref{thm:ZoomAlg} follows by taking $\inf$ on all $r_0\in (0,1)$.

\subsection{Improved regret bounds}
\label{subsec:stochastic-improved}

Let us provide regret bounds that take into account ``benign" context arrivals. The main difficulty here is to develop the corresponding definitions; the analysis then carries over without much modification. The added value is two-fold: first, we establish the intuition that benign context arrivals matter, and then the specific regret bound is used in Section~\ref{subsec:evolving} to match the result in~\cite{DynamicMAB-colt08}.

A crucial step in the proof of Theorem~\ref{thm:ZoomAlg} is to bound the number of active radius-$r$ balls by $N_r(\mPzoom)$, which is accomplished by observing that their centers form an $r$-packing $S$ of $\mPzoom$. We make this step more efficient, as follows. An active radius-$r$ ball is called \emph{full} if $\rad_t(B)\leq r$ for some round $t$. Note that each active ball is either full or a child of some other ball that is full. The number of children of a given ball is bounded by the doubling constant of the similarity space. Thus, it suffices to consider the number of active radius-$r$ balls that are full, which is at most $N_r(\mPzoom)$, and potentially much smaller.

Consider active radius-$r$ active balls that are full. Their centers form an $r$-packing $S$ of $\mPzoom$ with an additional property: each point $p\in S$ is assigned at least $1/r^2$ context arrivals $x_t$
so that
    $(x_t,y)\in B(p,r)$ for some arm $y$,
and each context arrival is assigned to at most one point in $S$.%
\footnote{Namely, each point $p\in S$ is assigned all contexts $x_t$ such that the corresponding ball is chosen in round $t$.}
A set $S\subset \mP$ with this property is called \emph{$r$-consistent} (with context arrivals). The \emph{adjusted $r$-packing number} of a set $\mP'\subset \mP$, denoted $\Nadj_r(\mP')$, is the maximal size of an $r$-consistent $r$-packing of $\mP'$.  It can be much smaller than the $r$-packing number of $\mP'$ if most context arrivals fall into a small region of the similarity space.

We make one further optimization, tailored to the application in Section~\ref{subsec:evolving}. Informally, we take advantage of context arrivals $x_t$ such that expected payoff $\mu(x_t,y)$ is either optimal or very suboptimal. A point $(x,y)\in \mP$ is called an \emph{$r$-winner} if for each
    $(x',y')\in B((x,y),\, 2r)$
it holds that
    $\mu(x',y') = \mu^*(x') $.
Let $\mWzoom$ be the set of all $r$-winners. It is easy to see that if $B$ is a radius-$r$ ball centered at an $r$-winner, and $B$ or its child is selected in a given round, then this round does not contribute to contextual regret. Therefore, it suffices to consider ($r$-consistent) $r$-packings of
    $\mPzoom\setminus \mWzoom$.

Our final guarantee is in terms of
    $\Nadj(\mPzoom\setminus \mWzoom)$,
which we term the \emph{adjusted $r$-zooming number}.

\begin{theorem}\label{thm:ZoomAlg-adjusted}
Consider the \problem{} with stochastic payoffs. The contextual regret $R(T)$ of the \ZoomAlg{} satisfies~\refeq{eq:regret-covNum}, where
$N_r$ is the adjusted $r$-zooming number and $C_0$ is the doubling constant of the similarity space times some absolute constant. Consequently, $R(T)$ satisfies the dimension-type guarantee~\refeq{eq:regret-dim}, where $d_c$ is the corresponding covering-type dimension.
\end{theorem}

\OMIT{ 
\begin{sketch}
It is easy to modify the proof of Theorem~\ref{thm:ZoomAlg} so that instead of counting the balls of radius $r$ that are activated by the algorithm, we count the balls of radius $r$ that become full. There can be at most $N_r$ such balls. Now, each active ball is either full at time $T$, or a child of some ball that has become full. The theorem follows since each parent ball has at most $\Cdbl$ children.
\end{sketch}
} 

\section{Lower bounds}
\label{sec:LBs}

\newcommand{\regretUB}{R^{\text{UB}}}

We match the upper bound in Theorem~\ref{thm:ZoomAlg} up to $O(\log T)$ factors. Our lower bound is very general: it applies to an arbitrary product similarity space, and moreover for a given similarity space it matches, up to $O(\log T)$ factors, any fixed value of the upper bound (as explained below).

We construct a distribution $\mI$ over problem instances on a given metric space, so that the lower bound is for a problem instance drawn from this distribution. A single problem instance would not suffice to establish a lower bound because a trivial algorithm that picks arm $y^*(x)$ for each context $x$ will achieve regret $0$.

The distribution $\mI$ satisfies the following two properties: the upper bound in Theorem~\ref{thm:ZoomAlg} is uniformly bounded from above by some number $R$, and any algorithm must incur regret at least $\Omega(R/\log T)$ in expectation over $\mI$. Moreover, we constrict such $\mI$ for every possible value of the upper bound in Theorem~\ref{thm:ZoomAlg} on a given metric space, i.e. not just for problem instances that are ``hard" for this metric space.

To formulate our result, let $\regretUB_\mu(T)$ denote the upper bound in Theorem~\ref{thm:ZoomAlg}, i.e. is the right-hand side of~\refeq{eq:regret-covNum} where $N_r = N_r(\mP_{\mu,r})$ is the $r$-zooming number. Let $\regretUB(T)$ denote the pessimistic version of this bound, namely right-hand side of~\refeq{eq:regret-covNum} where $N_r = N_r(\mP)$ is the packing number of $\mP$.

\begin{theorem}\label{thm:LBs-strong}
Consider the \problem\ with stochastic payoffs,  Let $(\mP,\D)$ be a product similarity space. Fix an arbitrary time horizon $T$ and a positive number
    $R\leq \regretUB(T)$.
Then there exists a distribution $\mI$ over problem instances on $(\mP,\D)$ with the following two properties:
\begin{OneLiners}
\item[(a)] $\regretUB_{\mu}(T) \leq O(R)$
for each problem instance in $\mathtt{support}(\mI)$.

\item[(b)] for any contextual bandit algorithm it holds that
    $\E_{\mI}[R(T)] \geq \Omega(R/\log T)$,

\end{OneLiners}
\end{theorem}

To prove this theorem, we build on the lower-bounding technique from~\cite{bandits-exp3}, and its extension to (context-free) bandits in metric spaces in~\cite{Bobby-nips04}. In particular, we use the basic \emph{needle-in-the-haystack} example from~\cite{bandits-exp3}, where the ``haystack" consists of several arms with expected payoff $\tfrac12$, and the ``needle" is an arm whose expected payoff is slightly higher.

\xhdr{The lower-bounding construction.}
Our construction is parameterized by two numbers: $r\in (0,\tfrac12]$ and
    $N\leq N_r(\mP)$,
where $N_r(\mP)$ is the $r$-packing number of $\mP$. Given these parameters, we construct a collection $\mI = \mI_{N,r}$ of $\Theta(N)$ problem instances as follows.

Let $N_{\TX,r}$ be the $r$-packing number of $X$ in the context space, and let $N_{\TY,r}$ be the $r$-packing number of $Y$ in the arms space. Note that
    $N_r(\mP) = N_{\TX,r} \times N_{\TY,r} $.
For simplicity, let us assume that
    $N = n_\TX \, n_\TY$,
where
	$1\leq n_X\leq N_{\TX,r}$ and $2\leq n_\TY \leq N_{\TY,r}$.

An \emph{$r$-net} is the set $S$ of points in a metric space such that any two points in $S$ are at distance $> r$ from each other, and each point in the metric space is within distance $\leq r$ from some point in $S$. Recall that any $r$-net  on the context space has size at least $N_{\TX,r}$. Let $S_\TX$ be an arbitrary set of $n_\TX$ points from one such $r$-net. Similarly, let $S_\TY$ be an arbitrary set of $n_\TY$ points from some $r$-net on the arms space. The sequence $\arr$ of context arrivals is any fixed permutation over the points in $S_\TX$, repeated indefinitely.

All problem instances in $\mI$ have 0-1 payoffs. For each $x\in S_\TX$ we construct a needle-in-the-haystack example on the set $S_\TY$. Namely, we pick one point $y^*(x)\in S_\TY$ to be the ``needle", and define
	$\mu(x,y^*(x)) = \tfrac{1}{2}+ \tfrac{r}{4}$,
and
	$\mu(x,y) = \tfrac12+\tfrac{r}{8}$
for each $y\in S_\TY\setminus \{y^*(x)\}$. We smoothen the expected payoffs so that far from $S_\TX\times S_\TY$ expected payoffs are $\tfrac12$ and the Lipschitz condition~\refeq{eq:LipschitzD} holds:
\begin{align}\label{eq:LB-construction}
\mu(x,y)
	\triangleq
	\max_{(x_0,\,y_0)\in S_\TX\times S_\TY}\;
	\max\left(\,
		\tfrac12,\; \mu(x_0,y_0) - \D_\TX(x,x_0) - \D_\TY(y,y_0)\
	\,\right).
\end{align}
Note that we obtain a distinct problem instance for each function
    $y^*(\cdot): S_\TX\to S_\TY$.
This completes our construction.

\xhdr{Analysis.} The useful properties of the above construction are summarized in the following lemma:

\begin{lemma}\label{lm:LB}
Fix $r\in (0,\tfrac12]$ and $N\leq N_r(\mP)$. Let $\mI = \mI_{N,r}$ and $T_0 = N\,r^{-2}$. Then:
\begin{OneLiners}
\item[(i)] for each problem instance in $\mI$ it holds that
    $\regretUB_{\mu}(T_0) \leq O(N/r)(\log T_0)$.

\item[(ii)] any contextual bandit algorithm has regret
	$\E_{\mI}[R(T_0)] \geq \Omega(N/r)$
for a problem instance chosen uniformly at random from $\mI$.
\end{OneLiners}
\end{lemma}

For the lower bound in Lemma~\ref{lm:LB}, the idea is that in $T$ rounds each context in $S_\TX$ contributes $\Omega(|S_\TY|/r)$ to contextual regret, resulting in total contextual regret $\Omega(N/r)$.

Before we proceed to prove Lemma~\ref{lm:LB}, let us use it to derive Theorem~\ref{thm:LBs-strong}.
Fix an arbitrary time horizon $T$ and a positive number
    $R\leq \regretUB(T)$.
Recall that since $N_r(\mP)$ is  non-increasing in $r$, for some constant $C>0$ it holds that
\begin{align}\label{eq:regret-covNum-inLB}
    \regretUB(T) = C\times \textstyle{\inf_{r\in (0,1)}}\,
    		\left( r T	+ \tfrac{1}{r}\,N_r(\mP) \log T \right).
\end{align}

\begin{claim}
Let
   $ r = \frac{R}{2C\,T(1+\log T)}$.
Then  $r \leq \tfrac12$ and $T r^2 \leq N_r(\mP)$.
\end{claim}

\begin{proof}
Denote $k(r) = N_r(\mP)$ and consider function $f(r) \triangleq k(r)/r^2$. This function is non-increasing in $r$; $f(1) = 1$ and $f(r)\to \infty$ for $r\to 0$. Therefore there exists $r_0\in (0,1)$ such that
    $f(r_0) \leq T \leq f(r_0/2)$.
Re-writing this, we obtain
\begin{align*}
k(r_0) \leq T\, r_0^2 \leq 4\, k(r_0/2).
\end{align*}
It follows that
\begin{align*}
R &\leq \regretUB(T) \leq C( T r_0 + \tfrac{1}{r_0}\, k(r_0) \log T)
    \leq C\, T r_0(1+ \log T).
\end{align*}
Thus
    $r \leq r_0/2$
and finally
    $T\, r^2 \leq T\, r_0^2/4 \leq k(r_0/2) \leq k(r) = N_r(\mP)$.
\end{proof}

So, Lemma~\ref{lm:LB} with
   $ r \triangleq \frac{R}{2C\, T(1+\log T)}$
and
    $N \triangleq T\, r^2$.
implies Theorem~\ref{thm:LBs-strong}.

\subsection{Proof of Lemma~\ref{lm:LB}}

\begin{claim}
Collection $\mI$ consists of valid instances of contextual MAB problem with similarity space $(\mP,\D)$.
\end{claim}
\begin{proof}
We need to prove that each problem instance in $\mP$ satisfies the Lipschitz condition~\refeq{eq:LipschitzD}. Assume the Lipschitz condition~\refeq{eq:LipschitzD} is violated for some points
    $(x,y),\, (x',y') \in X\times Y$.
For brevity, let
    $p = (x,y)$, $p' = (x',y')$,
and let us write $\mu(p) \triangleq \mu(x,y)$. Then
    $|\mu(p)-\mu(p')| >\D(p,p')$.

By~\refeq{eq:LB-construction},
    $\mu(\cdot)\in [\tfrac12, \tfrac12+\tfrac{r}{4}]$,
so
    $\D(p,p')< \tfrac{r}{4}$.

Without loss of generality, $\mu(p)>\mu(p')$. In particular, $\mu(p)>\tfrac12$. Therefore there exists
    $p_0 = (x_0,y_0) \in S_\TX \times S_\TY$
such that
    $\D(p,p_0) < \tfrac{r}{4}$.
Then
    $\D(p',p_0) < \tfrac{r}{2}$
by triangle inequality.

Now, for any other
    $p'_0 \in S_\TX\times S_\TY$
it holds that
    $\D(p_0, p'_0)> r$,
and thus by triangle inequality
    $\D(p, p'_0)> \tfrac{3r}{4}$
and
    $\D(p', p'_0)> \tfrac{r}{2}$.
It follows that~\refeq{eq:LB-construction} can be simplified as follows:
\begin{align*}
\begin{cases}
\mu(p) &= \max(\tfrac12,\, \mu(p_0) - \D(p,p_0)), \\
\mu(p')&= \max(\tfrac12,\, \mu(p_0) - \D(p',p_0)).
\end{cases}
\end{align*}
Therefore
\begin{align*}
|\mu(p) - \mu(p')|
    &=  \mu(p) - \mu(p') \\
    &= (\mu(p_0) - \D(p,p_0))
        - \max(\tfrac12,\, \mu(p_0) - \D(p',p_0)) \\
    &\leq (\mu(p_0) - \D(p,p_0)) - (\mu(p_0) - \D(p',p_0))) \\
    &= \D(p',p_0) - \D(p,p_0)
    \leq \D(p,p').
\end{align*}
So we have obtained a contradiction.
\end{proof}

\begin{claim}
For each instance in $\mP$ and $T_0 = N\, r^{-2}$ it holds that
    $\regretUB_{\mu}(T_0) \leq O(N/r)(\log T_0)$.
\end{claim}
\begin{proof}
Recall that $\regretUB_\mu(T_0)$ is the right-hand side of~\refeq{eq:regret-covNum} with $N_r = N_r(\mP_{\mu,r})$, where $\mP_{\mu,r}$ is defined by~\refeq{eq:P-mu-r}.

Fix $r'>0$. It is easy to see that
\begin{align*}
\mP_{\mu,\,r'} \subset \cup_{p\in S_\TX\times S_\TY}\; B(p,\tfrac{r}{4}).
\end{align*}
It follows that $N_{r'}(\mP_{\mu,r'}) \leq N$ whenever $r'\geq \tfrac{r}{4}$. Therefore, taking $r_0 = \tfrac{r}{4}$ in~\refeq{eq:regret-covNum}, we obtain
\begin{align*}
\regretUB_{\mu}(T_0)
    \leq O(r T_0 + \tfrac{N}{r} \log T_0)
    = O(N/r)(\log T_0).
\end{align*}
\end{proof}

\begin{claim}
Fix a contextual bandit algorithm \A. This algorithm has regret
	$\E_{\mI}[R(T_0)] \geq \Omega(N/r)$
for a problem instance chosen uniformly at random from $\mI$, where $T_0 = N\, r^{-2}$.
\end{claim}

\begin{proof}
Let $R(x,T)$ be the contribution of each context $x\in S_\TX$ to contextual regret:
\begin{align*}
R(x,T) = \sum_{t:\, x_t=x}\; \mu^*(x) - \mu(x,y_t),
\end{align*}
where $y_t$ is the arm chosen by the algorithm in round $t$. Our goal is to show that
    $R(x,T_0)\geq \Omega(r\,n_\TY)$.

We will consider each context $x\in S_\TX$ separately: the rounds when $x$ arrives form an instance $I_x$ of a context-free bandit problem that lasts for
    $T_0/n_\TX = n_\TY\, r^{-2}$
rounds, where expected payoffs are  given by $\mu(x,\cdot)$ as defined in~\refeq{eq:LB-construction}. Let $\mI_x$ be the family of all such instances $I_x$.

A uniform distribution over $\mI$ can be reformulated as follows: for each $x\in S_\TX$, pick the ``needle" $y^*(x)$ independently and uniformly at random from $S_\TY$. This induces a uniform distribution over instances in $\mI_x$, for each context $x\in S_\TX$. Informally, knowing full or partial information about $y^*(x)$ for some $x$ reveals no information whatsoever about $y^*(x')$ for any $x'\neq x$.

Formally, the contextual bandit algorithm $\A$ induces a bandit algorithm $\A_x$ for $I_x$, for each context $x\in S_\TX$: the $\A_x$ simulates the problem instance for $\A$ for all contexts $x'\neq x$ (starting from the ``needles" $y^*(x')$ chosen independently and uniformly at random from $S_\TY$). Then $\A_x$ has expected regret $R_x(T)$ which satisfies
    $\E[\, R(T)\,] =  \E[\, R(x,T) \,]$,
where the expectations on both sides are over the randomness in the respective algorithm and the random choice of the problem instance (resp., from $\mI_x$ and from $\mI$).

Thus, it remains to handle each $\mI_x$ separately: i.e., to prove that the expected regret of any bandit algorithm on an instance drawn uniformly at random from $\mI_x$ is at least $\Omega(r\,n_\TY)$. We use the KL-divergence technique that originated in~\cite{bandits-exp3}. If the set of arms were exactly $S_\TY$, then the desired lower bound would follow from~\cite{bandits-exp3} directly. To handle the problem instances in $\mI_x$, we use an extension of the technique from~\cite{bandits-exp3}, which is implicit in~\cite{Bobby-nips04} and encapsulated as a stand-alone theorem in \cite{LipschitzMAB-merged-arxiv}. We restate this theorem as  Theorem~\ref{thm:KL-div} in Appendix~\ref{app:KL-div}.

It is easy to check that the family $\mI_x$ of problem instances satisfies the preconditions in Theorem~\ref{thm:KL-div}. Fix $x\in S_\TX$. For a given choice of the ``needle" $y^*=y^*(x)\in S_\TY$, let
    $\mu(x,y\,|\,y^*)$.
be the expected payoff of each arm $y$, and let
    $\nu_{y^*}(\cdot) = \mu(x,\cdot \,|\, y^*)$
be the corresponding payoff function for the bandit instance $I_x$. Then
    $\{ \nu_{y^*} \}$, $y^*\in S_\TY$
is an ``$(\eps,k)$-ensemble'' for $\eps = \tfrac{r}{8}$ and $k=|S_\TY|$.
\end{proof}


\section{Applications of contextual zooming}
\label{sec:apps}

We describe several applications of contextual zooming: to MAB with slow adversarial change (Section~\ref{subsec:driftProblem}), to MAB with stochastically evolving payoffs (Section~\ref{subsec:evolving}), and to the ``sleeping bandits" problem (Section~\ref{subsec:sleepingMAB}). In particular, we recover some of the main results in~\cite{DynamicMAB-colt08} and~\cite{sleeping-colt08}. Also, in Section~\ref{subsec:ZoomingRBA} we discuss a recent application of contextual zooming to bandit learning-to-rank, which has been published in~\cite{ZoomingRBA-icml10}.

\subsection{MAB with slow adversarial change}
\label{subsec:driftProblem}


Consider the (context-free) adversarial MAB problem in which expected payoffs of each arm change over time \emph{gradually}. Specifically, we assume that expected payoff of each arm $y$ changes by at most $\sigma_y$ in each round, for some a-priori known \emph{volatilities} $\sigma_y$. The algorithm's goal here is continuously adapt to the changing environment, rather than converge to the best fixed mapping from contexts to arms. We call this setting the {\bf\em \driftProblem}.

Formally, our benchmark is a fictitious  algorithm which in each round selects an arm that maximizes expected payoff for the current context. The difference in expected payoff between this benchmark and a given algorithm is called \emph{dynamic regret} of this algorithm. It is easy to see that the worst-case dynamic regret of any algorithm cannot be sublinear in time.%
\footnote{For example, consider problem instances with two arms such that the payoff of each arm in each round is either $\tfrac12$ or $\tfrac12+\sigma$ (and can change from round to round). Over this family of problem instances, dynamic regret in $T$ rounds is at least $\tfrac12\,\sigma T$.}
We are primarily interested in algorithm's long-term performance, as quantified by \emph{average} dynamic regret
    $\hat{R}(T) \triangleq R(T)/T$.
Our goal is to bound the limit
    $\lim_{T\to \infty} \hat{R}(T)$
in terms of the parameters: the number of arms and the volatilities $\sigma_y$. (In general, such upper bound is non-trivial as long as it is smaller than 1, since all payoffs are at most 1.)

We restate this setting as a contextual MAB problem with stochastic payoffs in which the $t$-th context arrival is simply $x_t=t$. Then $\mu(t,y)$ is the expected payoff of arm $y$ at time $t$, and dynamic regret coincides with contextual regret specialized to the case $x_t=t$. Each arm $y$ satisfies a ``temporal constraint":
\begin{align}\label{eq:driftingMAB-Lip}
	|\mu(t,y)-\mu(t',y)| \leq \sigma_y\,|t-t'|
\end{align}
for some constant $\sigma_y$. To set up the corresponding similarity space $(\mP,\D)$, let
    $\mP = [T]\times Y$,
and
\begin{align}\label{eq:driftingMAB-D}
    \D((t,y),\, (t',y'))
        = \min(1,\;  \sigma_y\,|t-t'| + \indicator{y\neq y'}).
\end{align}

\OMIT{ 
The similarity distance is simply
\begin{align}\label{eq:driftingMAB-similarity}
\D((t,y),\, (t',y'))
    = \begin{cases}
        \D_y(t,t')  & \text{if $y = y'$}, \\
        1           & \text{otherwise}.
    \end{cases}
\end{align}
} 

Our solution for the \driftProblem{} is the \ZoomAlg{} parameterized by the similarity space $(\mP,\D)$. To obtain guarantees for the long-term performance, we run contextual zooming with a suitably chosen time horizon $T_0$, and restart it every $T_0$ rounds; we call this version \emph{contextual zooming with period $T_0$}. Periodically restarting the algorithm is a simple way to prevent the change over time from becoming too large; it suffices to obtain strong provable guarantees.

The general provable guarantees are provided by Theorem~\ref{thm:ZoomAlg} and Theorem~\ref{thm:ZoomAlg-adjusted}. Below we work out some specific, tractable corollaries.

\begin{corollary}\label{cor:driftingMAB-pessimistic}
Consider the \driftProblem{} with $k$ arms and volatilities $\sigma_y\equiv \sigma$.
Contextual zooming with period $T_0$ has average dynamic regret
    $\hat{R}(T) = O(k\sigma \log T_0)^{1/3}$,
whenever
    $T\geq T_0\geq (\tfrac{k}{\sigma^2})^{1/3}\, \log \tfrac{k}{\sigma} $.
\end{corollary}

\begin{proof}
It suffices to upper-bound regret in a single period. Indeed, if $R(T_0) \leq R$ for any problem instance, then $R(T) \leq R\, \cel{T/T_0}$ for any $T>T_0$. It follows that
    $\hat{R}(T) \leq 2\,\hat{R}(T_0)$.
Therefore, from here on we can focus on analyzing contextual zooming itself, rather than contextual zooming with a period.

The main step is to derive the regret bound~\refeq{eq:regret-covNum} with a specific upper bound on $N_r$. We will show that
\begin{align}\label{eq:driftingMAB-pessimistic}
    \text{dynamic regret $R(\cdot)$ satisfies~\refeq{eq:regret-covNum} with }
    N_r \leq k\,\cel{\tfrac{T\sigma}{r}}.
\end{align}
Plugging $N_r \leq k\,(1+\tfrac{T\sigma}{r})$ into~\refeq{eq:regret-covNum} and taking $r_0 = (k\sigma \log T)^{1/3}$ we obtain\footnote{This choice of $r_0$ minimizes the $\inf$ expression in~\refeq{eq:regret-covNum} up to constant factors by equating the two summands.}
\begin{align*}
    R(T) \leq O(T) (k\sigma \log T)^{1/3} + O(\tfrac{k^2}{\sigma})^{1/3}(\log T)
    \qquad \forall T\geq 1.
\end{align*}
Therefore, for any
        $T\geq (\tfrac{k}{\sigma^2})^{1/3}\, \log \tfrac{k}{\sigma} $
we have
    $\hat{R}(T) = O(k\sigma \log T)^{1/3}$.

It remains to prove~\refeq{eq:driftingMAB-pessimistic}. We use a pessimistic version of Theorem~\ref{thm:ZoomAlg}:~\refeq{eq:regret-covNum} with $N_r = N_r(\mP)$, the  $r$-packing number of $\mP$. Fix
    $r\in (0,1]$.
For any $r$-packing $S$ of $\mP$ and each arm $y$, each time interval $I$ of duration
    $\Delta_r\triangleq r/\sigma$
provides at most one point for $S$: there exists at most one time $t\in I$ such that $(t,y)\in S$. Since there are at most $\cel{T/\Delta_r}$ such intervals $I$, it follows that
    $N_r(\mP) \leq k\, \cel{T/\Delta_r} \leq k\,(1+T \tfrac{\sigma}{r})$.
\end{proof}

The restriction $\sigma_y \equiv \sigma$ is non-essential: it is not hard to obtain the same bound with
    $\sigma = \tfrac{1}{k} \sum_y \sigma_y$.
Modifying the construction in Section~\ref{sec:LBs} (details omitted from this version) one can show that
Corollary~\ref{cor:driftingMAB-pessimistic} is optimal up to $O(\log T)$ factors.

\xhdr{Drifting MAB with spatial constraints.}
The temporal version ($x_t=t$) of our contextual MAB setting with stochastic payoffs subsumes the drifting MAB problem and furthermore allows to combine the temporal constraints~\refeq{eq:driftingMAB-Lip} described above (for each arm, across time) with ``spatial constraints" (for each time, across arms). To the best of our knowledge, such MAB models are quite rare in the literature.\footnote{The only other MAB model with this flavor that we are aware of, found in~\cite{Hazan-soda09}, combines linear payoffs and bounded ``total variation" (aggregate temporal change) of the cost functions.}
A clean example is
\begin{align}\label{eq:driftingMAB-spatial}
    \D((t,y),\, (t',y')) = \min(1,\; \sigma\,|t-t'| + \D_\TY(y,y') ),
\end{align}
where $(Y,\D_\TY)$ is the arms space. For this example, we can obtain an analog of Corollary~\ref{cor:driftingMAB-pessimistic}, where the regret bound depends on the covering dimension of the arms space $(Y,\D_\TY)$.

\begin{corollary}\label{cor:driftingMAB-spatial}
Consider the \driftProblem{} with spatial constraints~\refeq{eq:driftingMAB-spatial}, where $\sigma$ is the volatility. Let $d$ be the covering dimension of the arms space, with multiplier $k$. Contextual zooming with period $T_0$ has average dynamic regret
    $\hat{R}(T) = O(k\, \sigma \log T_0)^{\tfrac{1}{d+3}}$,
whenever
    $T\geq T_0
        \geq  k^{\tfrac{1}{d+3}} \; \sigma^{-\tfrac{d+2}{d+3}} \;
            \log \tfrac{k}{\sigma} $.
\end{corollary}

\begin{note}{Remark.}
We obtain Corollary~\ref{cor:driftingMAB-pessimistic} as a special case by setting $d=0$.
\end{note}

\begin{proof}
It suffices to bound $\hat{R}(T_0)$ for (non-periodic) contextual zooming. First we bound the $r$-covering number of the similarity space $(\mP,\D)$:
\begin{align*}
N_r(\mP)
    = N_r^\TX(X) \times N_r^\TY(Y)
    \leq \cel{\tfrac{T\sigma}{r}}\, k\,r^{-d},
\end{align*}
where $N_r^\TX(\cdot)$ is the $r$-covering number in the context space, and $N_r^\TY(\cdot)$ is that in the arms space. We worked out the former for  Corollary~\ref{cor:driftingMAB-pessimistic}. Plugging this into~\refeq{eq:regret-covNum} and taking
    $r_0 = (k\, \sigma \log T)^{1/(3+d)}$,
we obtain
\begin{align*}
    R(T) \leq O(T) (k\sigma \log T)^{\tfrac{1}{d+3}}
        + O\left(
                k^{\tfrac{2}{d+3}}\,
                \sigma^{\tfrac{d+1}{d+3}} \,
                \log T
        \right)
    \qquad \forall T\geq 1.
\end{align*}
The desired bound on $\hat{R}(T_0)$ follows easily.
\end{proof}

\OMIT{ 
Moreover, the temporal constraints~\refeq{eq:driftingMAB-Lip} can take an arbitrary metric $D_y$ on contexts for each arm $y\in Y$. One way to generalize the drifting MAB setting is to assume, for some fixed $\gamma\in (0,1]$, that
\begin{align}\label{eq:driftingMAB-gamma}
	\D_y(t,t') = \min(1,\; \sigma_y\, |t-t'|^\gamma).
\end{align}
E.g., for $\gamma=\tfrac12$ this corresponds to a typical behavior of a random walk.\footnote{A $t$-step displacement of a random walk with step $\pm \sigma$ is $\Theta(\sigma \sqrt{t})$ with high probability.}
} 

\subsection{Bandits with stochastically evolving payoffs}
\label{subsec:evolving}

We consider a special case of \driftProblem{} in which expected payoffs of each arm evolve over time according to a stochastic process with a uniform stationary distribution. We obtain improved regret bounds for contextual zooming, taking advantage of the full power of our analysis in Section~\ref{sec:ZoomAlg}.

In particular, we address a version in which the stochastic process is a random walk with step $\pm \sigma$. This version has been previously studied in~\cite{DynamicMAB-colt08} under the name ``Dynamic MAB". For the main case ($\sigma_i\equiv \sigma$), our regret bound for Dynamic MAB matches that in~\cite{DynamicMAB-colt08}.

To improve the flow of the paper, the proofs are deferred to  Appendix~\ref{app:evolving}.

\xhdr{Uniform marginals.} First we address the general version that we call {\em drifting MAB with uniform marginals}. Formally, we assume that expected payoffs $\mu(\cdot,y)$ of each arm $y$ evolve over time according to some stochastic process $\Gamma_y$ that satisfies~\refeq{eq:driftingMAB-Lip}. We assume that the processes $\Gamma_y$, $y\in Y$ are mutually independent, and moreover that the marginal distributions $\mu(t,y)$ are uniform on $[0,1]$, for each time $t$ and each arm $y$.~\footnote{E.g. this assumption is satisfied by any Markov Chain on $[0,1]$ with stationary initial distribution. }  We are interested in $\E_\Gamma[\hat{R}(T)]$,
average dynamic regret in expectation over the processes $\Gamma_y$.

We obtain a stronger version of~\refeq{eq:driftingMAB-pessimistic} via Theorem~\ref{thm:ZoomAlg-adjusted}. To use this theorem, we need to bound the adjusted $r$-zooming number, call it $N_r$. We show that
\begin{align}\label{eq:driftingMAB-uniform}
\E_{\Gamma}[N_r] = O(kr)\cel{\tfrac{T\sigma}{r}}
    \text{ and }
\left( r < \sigma^{1/3} \Rightarrow N_r=0 \right).
\end{align}
\noindent Then we obtain a different bound on dynamic regret, which is stronger than Corollary~\ref{cor:driftingMAB-pessimistic} for $k<\sigma^{-1/2}$.

\begin{corollary}\label{cor:driftingMAB-marginals}
Consider drifting MAB with uniform marginals, with $k$ arms and volatilities $\sigma_y\equiv \sigma$. Contextual zooming with period $T_0$ satisfies
    $\E_\Gamma[\hat{R}(T)] = O(k\,\sigma^{2/3}\,\log T_0)$,
whenever
    $T\geq T_0\geq \sigma^{-2/3} \log\tfrac{1}{\sigma}$.
\end{corollary}

The crux of the proof is to show~\refeq{eq:driftingMAB-uniform}. Interestingly, it involves using all three optimizations in Theorem~\ref{thm:ZoomAlg-adjusted}:
    $N_r(\mPzoom)$,
    $N_r(\mPzoom\setminus\mWzoom)$ and
    $\Nadj_r(\cdot)$,
whereas any two of them do not seem to suffice. The rest is a straightforward computation similar to the one in Corollary~\ref{cor:driftingMAB-pessimistic}.

\xhdr{Dynamic MAB.}
Let us consider the Dynamic MAB problem from~\cite{DynamicMAB-colt08}. Here for each arm $y$ the stochastic process $\Gamma_y$ is a random walk with step $\pm \sigma_y$. To ensure that the random walk stays within the interval $[0,1]$, we assume reflecting boundaries. Formally, we assume that $1/\sigma_y\in \N$, and once a boundary is reached, the next step is deterministically in the opposite direction.\footnote{\cite{DynamicMAB-colt08} has a slightly more general setup which does not require $1/\sigma_y\in \N$.}

According to a well-known fact about random walks,%
\footnote{For example, this follows as a simple application of Azuma-Hoeffding inequality.}
\begin{align}\label{eq:DynamicMAB-Lip}
\Pr\left[
	|\mu(t,y)-\mu(t',y)| \leq O(\sigma_y\,|t-t'|^{1/2} \log T_0)
\right] \geq 1-T_0^{-3}
\quad \text{if $|t-t'|\leq T_0$}.
\end{align}
We use contextual zooming with period $T_0$, but we parameterize it by a different similarity space
    $(\mP,\D_{T_0})$
that we define according to~\refeq{eq:DynamicMAB-Lip}. Namely, we set
\begin{align}\label{eq:DynamicD}
    \D_{T_0}((t,y),\, (t',y'))
        = \min(1,\; \sigma_y\,|t-t'|^{1/2} \log T_0  + \indicator{y\neq y'}).
\end{align}

\noindent The following corollary is proved using the same technique as
Corollary~\ref{cor:driftingMAB-marginals}:

\begin{corollary}\label{cor:DynamicMAB}
Consider the Dynamic MAB problem with $k$ arms and volatilities $\sigma_y\equiv \sigma$. Let $\mathtt{ALG}_{T_0}$ denote the contextual zooming algorotihm with period $T_0$ which is parameterized by the similarity space $(\mP,\D_{T_0})$. Then
    $\mathtt{ALG}_{T_0}$
satisfies
    $\E_\Gamma[\hat{R}(T)] = O(k\,\sigma\,\log^2 T_0)$,
whenever
    $T\geq T_0\geq \tfrac{1}{\sigma} \log\tfrac{1}{\sigma}$.
\end{corollary}

\subsection{Sleeping bandits}
\label{subsec:sleepingMAB}

The \emph{sleeping bandits} problem~\cite{sleeping-colt08} is an extension of MAB where in each round some arms can be ``asleep", i.e. not available in this round. One of the main results in~\cite{sleeping-colt08} is on sleeping bandits with stochastic payoffs. We recover this result using contextual zooming.

We model sleeping bandits as \problem{} where each context arrival $x_t$ corresponds to the set of arms that are ``awake" in this round. More precisely, for every subset $S\subset Y$ of arms there is a distinct context $x_S$, and
    $\mP = \{(x_S,y):\, y\in S\subset Y  \}$.
is the set of feasible context-arm pairs. The similarity distance is simply
    $\D((x,y),\, (x',y')) = \indicator{y\neq y'}$.
Note that the Lipschitz condition~\refeq{eq:LipschitzD} is satisfied.

For this setting, contextual zooming essentially reduces to the ``highest awake index" algorithm in~\cite{sleeping-colt08}. In fact, we can re-derive the result~\cite{sleeping-colt08} on sleeping MAB with stochastic payoffs as an easy corollary of Theorem~\ref{thm:ZoomAlg}.

\begin{corollary}
Consider the sleeping MAB problem with stochastic payoffs. Order the arms so that their expected payoffs are
    $\mu_1\leq \mu_2 \leq \ldots \leq \mu_n$,
where $n$ is the number of arms. Let $\Delta_i = \mu_{i+1}-\mu_i$. Then
\begin{align*}
R(T) \leq \inf_{r>0} \left(
        rT + \sum_{i:\, \Delta_i>r}\; \frac{O(\log T)}{\Delta_i}
    \right).
\end{align*}
\end{corollary}

\begin{proof}
The $r$-zooming number $N_r(\mP_{\mu,r})$ is equal to the number of distinct \emph{arms} in $\mP_{\mu,r}$, i.e. the number of arms $i\in Y$ such that
    $\Delta(x,i)\leq 12r$
for some context $x$. Note that for a given arm $i$, the quantity $\Delta(x,i)$ is minimized when the set of awake arms is $S = \{i,i+1\}$. Therefore, $N_r(\mP_{\mu,r})$ is equal to the number of arms $i\in Y$ such that
    $\Delta_i\leq 12r$. It follows that
\begin{align*}
N_{r>r_0}(\mP_{\mu,r})
    &= \textstyle{\sum_{i=1}^n} \, \indicator{\Delta_i\leq 12r}. \\
\textstyle{\sum_{r>r_0}}\; \tfrac{1}{r} N_{r>r_0}(\mP_{\mu,r})
    &= \textstyle{\sum_{r>r_0}}\; \textstyle{\sum_{i=1}^n}\;
                \tfrac{1}{r} \, \indicator{\Delta_i\leq 12r} \\
    &=\textstyle{\sum_{i=1}^n}\; \textstyle{\sum_{r>r_0}}\;
                \tfrac{1}{r} \, \indicator{\Delta_i\leq 12r} \\
    &=\textstyle{\sum_{i:\, \Delta_i>r_0}}\; O(\tfrac{1}{\Delta_i}).\\
R(T)
    &\leq \inf_{r_0>0} \left(
        r_0\,T + O(\log T)\; \textstyle{\sum_{r>r_0}}\; \tfrac{1}{r} N_r(\mP_{\mu,r})
    \right) \\
    &\leq \inf_{r_0>0} \left(
        r_0\,T + O(\log T)\;
            \textstyle{\sum_{i:\, \Delta_i>r_0}}\; O(\tfrac{1}{\Delta_i})
    \right),
\end{align*}
as required. (In the above equations, $\sum_{r>r_0}$ denotes the sum over all $r=2^{-j}>r_0$ such that $j\in \N$.)
\end{proof}

Moreover, the \problem{} extends the sleeping bandits setting by incorporating similarity information on arms. The \ZoomAlg{} (and its analysis) applies, and is geared to exploit this additional similarity information.

\subsection{Bandit learning-to-rank}
\label{subsec:ZoomingRBA}

Following a preliminary publication of this paper on {\tt arxiv.org},
contextual zooming has been applied in~\cite{ZoomingRBA-icml10} to bandit learning-to-rank. Interestingly, the ``contexts" studied in~\cite{ZoomingRBA-icml10} are very different from what we considered so far.

The basic setting,  motivated by web search, was introduced in~\cite{RBA-icml08}. In each round a new user arrives. The algorithm selects a ranked list of $k$ documents and presents it to the user who clicks on at most one document, namely on the first document that (s)he finds relevant. A user is specified by a binary vector over documents. The goal is to minimize \emph{abandonment}: the number of rounds with no clicks.

\cite{ZoomingRBA-icml10} study an extension in which metric similarity information is available. They consider a version with \emph{stochastic payoffs}: in each round,
the user vector is an independent sample from a fixed distribution, and assume a Lipschitz-style condition that connects expected clicks with the metric space. They run a separate bandit algorithm (e.g., contextual zooming) for each of the $k$ ``slots" in the ranking. Without loss of generality, in each round the documents are selected sequentially, in the top-down order. Since a document in slot $i$ is clicked in a given round only if all higher ranked documents are not relevant, they treat the set of documents in the higher slots as a \emph{context} for the $i$-th algorithm. The Lipschitz-style condition on expected clicks suffices to guarantee the corresponding  Lipschitz-style condition on contexts.


\section{Bandits with stochastically evolving payoffs: missing proofs}
\label{app:evolving}

We prove Corollary~\ref{cor:driftingMAB-marginals} and Corollary~\ref{cor:DynamicMAB} which address the performance of contextual zooming for the stochastically evolving payoffs. In each corollary we bound from above the average dynamic regret $\hat{R}(T)$ of contextual zooming with period $T_0$, for any $T\geq T_0$. Since
    $\hat{R}(T) \leq 2 \hat{R}(T_0)$,
it suffices to bound $\hat{R}(T_0)$, which is the same as $\hat{R}(T_0)$ for (non-periodic) contextual zooming. Therefore, we can focus on analyzing the non-periodic algorithm.

We start with two simple auxiliary claims.

\begin{claim}\label{cl:Delta-Lip}
Consider the \problem{} with a product similarity space. Let
    $\Delta(x,y)\triangleq \mu^*(x) - \mu(x,y)$
be the ``badness" of point $(x,y)$ in the similarity space. Then
\begin{align}\label{eq:Delta-Lip}
|\Delta(x,y) - \Delta(x',y)| \leq 2\,\D_\TX(x,x')
    \qquad \forall x,x'\in X,\, y\in Y.
\end{align}

\end{claim}
\begin{proof}
First we show that the benchmark payoff $\mu(\cdot)$ satisfies a Lipschitz condition:
\begin{align}\label{eq:context-Lip}
|\mu^*(x) - \mu^*(x')| \leq \D_\TX(x,x')
    \qquad \forall x,x'\in X.
\end{align}
Indeed, it holds that
    $\mu^*(x) = \mu(x,y)$ and $\mu^*(x') = \mu(x,y')$
for some arms $y,y'\in Y$. Then
\begin{align*}
\mu^*(x)
    &=   \mu(x,y)
    \geq \mu(x,y')
    \geq \mu(x',y') - \D_\TX(x,x')
    = \mu^*(x') - \D_\TX(x,x'),
\end{align*}
and likewise for the other direction. Now,
\begin{align*}
|\Delta(x,y) - \Delta(x',y)|
    \leq |\mu^*(x) - \mu^*(x')| + |\mu(x,y) - \mu(x',y)|
    \leq 2\,\D_\TX(x,x').
\end{align*}
\end{proof}

\begin{claim}\label{cl:max-of-k}
Let $Z_1, \ldots, Z_k$ be independent random variables distributed uniformly at random on $[0,1]$. Let $Z^* = \max_i Z_i$.  Fix $r>0$ and let
    $S = \{i:\, Z^*> Z_i \geq Z^*-r  \}$.
Then
    $\E[\,|S|\,] = kr$.
\end{claim}

\noindent This is a textbook result; we provide a proof for the sake of completeness.

\begin{proof}
Conditional on $Z^*$, it holds that
\begin{align*}
\E[\,|S|\,]
    &= \E\left[ \textstyle{\sum_i}  \indicator{Z_i\in S} \right]
     =k\, \Pr[Z_i\in S ] \\
    &=k\, \Pr[ Z_i\in S \,| Z_i< Z^* ] \times \Pr[Z_i<Z^*] \\
    &=k\, \tfrac{r}{Z^*} \, \tfrac{k-1}{k}
     = (k-1) r/Z^*.
\end{align*}
Integrating over $Z^*$, and letting
    $F(z) \triangleq \Pr[Z^*\leq z] = z^k$,
we obtain that
\begin{align*}
\E[\,\tfrac{1}{Z^*}\,]
    &= \textstyle{\int_0^1} \tfrac{1}{z}\, F'(z) dz  = \tfrac{k}{k-1}\\
\E[\,|S|\,]
    &= (k-1) r\; \E[\,\tfrac{1}{Z^*}\,]
    = kr. \qedhere
\end{align*}
\end{proof}

\begin{proofof}{Corollary~\ref{cor:driftingMAB-marginals}}
It suffices to bound $\hat{R}(T_0)$ for (non-periodic) contextual zooming.

Let
    $\D_\TX(t,t')\triangleq \sigma |t-t'| $
be the context distance implicit in the temporal constraint~\refeq{eq:driftingMAB-Lip}. For each $r>0$, pick a number $T_r$ such that
    $\D_\TX(t,t')\leq r \iff |t-t'| \leq T_r$.
Clearly,
    $T_r \triangleq \tfrac{r}{\sigma}$.

The crux is to bound the adjusted $r$-zooming number, call it $N_r$, namely to show~\refeq{eq:driftingMAB-uniform}. For the sake of convenience, let us restate it here (and let us use the notation $T_r$):
\begin{align}\label{eq:driftingMAB-uniform-app}
\E_{\Gamma}[N_r] = O(kr)\cel{\tfrac{T}{T_r}}
    \text{ and }
\left( T_r < 1/r^2 \Rightarrow N_r=0 \right).
\end{align}

Recall that $N_r = \Nadj(\mPzoom\setminus \mWzoom)$, where $\mWzoom$ is the set of all $r$-winners (see Section~\ref{subsec:stochastic-improved} for the definition). Fix $r\in (0,1]$ and let $S$ be some $r$-packing of $\mPzoom\setminus \mWzoom$. Partition the time into $\cel{\tfrac{T}{T_r}}$ intervals of duration $T_r$.
Fix one such interval $I$. Let
    $S_I \triangleq \{(t,y)\in S:\, t\in I \}$,
the set of points in $S$ that correspond to times in $I$. Recall the notation
        $\Delta(x,y)\triangleq \mu^*(x) - \mu(x,y)$
and let
\begin{align}\label{eq:cor-driftingMAB-marginals-YI-defn}
    Y_I \triangleq \{y\in Y:\; \Delta(t_I,y) \leq  14\,r\},
        \text{ where $t_I \triangleq \min(I)$}.
\end{align}
All quantities in~\refeq{eq:cor-driftingMAB-marginals-YI-defn} refer to a fixed time $t_I$, which will allow us to use the uniform marginals property.

Note that $Y_I$ contains at least one arm, namely the best arm $y^*(t_I)$. We claim that
\begin{align}\label{eq:cor-driftingMAB-marginals-SI}
    |S_I| \leq 2\, |Y_I\setminus \{y^*(t_I)\} |.
\end{align}
Fix arm $y$. First,
    $\D_\TX(t,t')\leq r$ for any $t,t'\in I$, so
there exists at most one $t\in I$ such that $(t,y)\in S$. Second, suppose such $t$ exists. Since $S\subset \mPzoom$, it follows that
    $\Delta(t,y) \leq 12\, r$.
By Claim~\ref{cl:Delta-Lip} it holds that
\begin{align*}
\Delta(t_I,y)
    \leq \Delta(t,y) + 2\,\D_\TX(t,t')
    \leq 14\, r.
\end{align*}
So $y\in Y_I$. It follows that
    $|S_I| \leq |Y_I|$.

To obtain~\refeq{eq:cor-driftingMAB-marginals-SI}, we show that $S_I=0$ whenever $|Y_I| = 1$. Indeed, suppose $Y_I = \{y\}$ is a singleton set, and $|S_I|>0$. Then $S_I = \{(t,y)\}$ for some $t\in I$. We will show that $(t,y)$ is an $r$-winner, contradicting the definition of $S$. For any arm $y'\neq y$ and any time $t'$ such that $\D_\TX(t,t')\leq 2r$ it holds that
\begin{align*}
    \mu(t_I,y)
        &= \mu^*(t_I)
        >     \mu(t_I,y') + 14r \\
    \mu(t',y)
        &\geq \mu(t_I,y) - \D_\TX(t',t_I)
        \geq \mu(t_I,y) - 3r \\
        &>    \mu(t_I,y')+ 11r \\
        &\geq \mu(t',y') - \D_\TX(t',t_I) + 11r \\
        &\geq \mu(t',y') +8r.
\end{align*}
and so $\mu(t',y)=\mu^*(t')$. Thus, $(t,y)$ is an $r$-winner as claimed. This completes the proof of~\refeq{eq:cor-driftingMAB-marginals-SI}.

Now using~\refeq{eq:cor-driftingMAB-marginals-SI} and Claim~\ref{cl:max-of-k} we obtain that
\begin{align*}
\E_\Gamma[\, |S_I|\,]
    &\leq 2\, \E_\Gamma[\, |Y_I\setminus \{y^*(t_I)\} | \,]
    \leq O(kr) \\
\E_\Gamma[\,|S| \,]
    &\leq \cel{\tfrac{T}{T_r}} \; \E[\, |S_I|\,]
    \leq  O(kr)\; \cel{\tfrac{T}{T_r}}.
\end{align*}
Taking the $\max$ over all possible $S$, we obtain
    $\E_\Gamma[\mPzoom\setminus \mWzoom] \leq  O(kr)\; \cel{\tfrac{T}{T_r}}$.
To complete the proof of~\refeq{eq:driftingMAB-uniform-app}, we note that $S$ cannot be $r$-consistent unless $|I|\geq 1/r^2$.

Now that we have~\refeq{eq:driftingMAB-uniform-app}, the rest is a simple computation. We use Theorem~\ref{thm:ZoomAlg-adjusted}, namely we take~\refeq{eq:regret-covNum} with $r_0\rightarrow 0$, plug in~\refeq{eq:driftingMAB-uniform-app}, and recall that
    $T_r \geq 1/r^2 \iff r\geq \sigma^{1/3}$.
\begin{align*}
R(T)
    &\leq \textstyle{\sum_{r=2^i\geq \sigma^{1/3}}}\;
            \tfrac{1}{r}\ N_r\; O(\log T) \\
\E_\Gamma[R(T)]
    &\leq \textstyle{\sum_{r=2^i\geq \sigma^{1/3}}} \;
            O(k\log T)(\tfrac{T\sigma}{r}+1)  \\
    &\leq O(k\log T)(T\sigma^{2/3} + \log\tfrac{1}{\sigma}).
\end{align*}
It follows that
    $\E_\Gamma[\hat{R}(T)] \leq O(k\,\sigma^{2/3} \log T)$
for any $T\geq \sigma^{-2/3}\log \tfrac{1}{\sigma}$.
\end{proofof}

\begin{proofof}{Corollary~\ref{cor:DynamicMAB}}
It suffices to bound $\hat{R}(T_0)$ for (non-periodic) contextual zooming.

Recall that expected payoffs satisfy the temporal constraint~\refeq{eq:DynamicMAB-Lip}. Consider the high-probability event that
\begin{align}\label{eq:DynamicMAB-Lip-HP}
|\mu(t,y) - \mu(t',y)|
    \leq  \sigma\,|t-t'|^{1/2} \log T_0
    \qquad \forall t,t' \in [1, T_0],\, y\in Y.
\end{align}
Since expected regret due to the failure of~\refeq{eq:DynamicMAB-Lip-HP} is negligible, from here on we will assume that~\refeq{eq:DynamicMAB-Lip-HP} holds deterministically.

Let
    $\D_\TX(t,t') \triangleq \sigma\,|t-t'|^{1/2} \log T_0$
be the distance on contexts implicit in~\refeq{eq:DynamicMAB-Lip-HP}. For each $r>0$, define
    $T_r\triangleq (\tfrac{r}{\sigma\log T_0})^2$.
Then~\refeq{eq:driftingMAB-uniform-app} follows exactly as in the proof of Corollary~\ref{cor:driftingMAB-marginals}. We use
Theorem~\ref{thm:ZoomAlg-adjusted} similarly: we take~\refeq{eq:regret-covNum} with $r_0\rightarrow 0$, plug in~\refeq{eq:driftingMAB-uniform-app}, and note that
    $T_r \geq 1/r^2 \iff r\geq (\sigma\log T_0)^{1/2}$.
We obtain
\begin{align*}
\E_\Gamma[R(T_0)]
    &\leq \sum_{r=2^i\geq (\sigma\log T_0)^{1/2}}\;
            O(k\log T_0)(\tfrac{T_0}{T_r}+1)  \\
    &\leq O(k\log^2 T_0)(T_0\,\sigma + \log\tfrac{1}{\sigma}).
\end{align*}
It follows that
    $\E_\Gamma[\hat{R}(T)] \leq O(k\,\sigma \log^2 T_0)$
as long as $T_0\geq \tfrac{1}{\sigma}\,\log \tfrac{1}{\sigma}$.
\end{proofof}

\section{Contextual bandits with adversarial payoffs}
\label{sec:CovAlg}

In this section we consider the adversarial setting. We provide an algorithm which maintains an adaptive partition of the context space and thus takes advantage of ``benign" context arrivals.
It is in fact a \emph{meta-algorithm}: given a bandit algorithm \bandit, we present a contextual bandit algorithm, called \contextBandit, which calls \bandit{} as a subroutine.

\subsection{Our setting}

Recall that in each round $t$, the context $x_t\in X$ is revealed, then the algorithm picks an arm $y_t\in Y$ and observes the payoff $\pi_t\in [0,1]$. Here $X$ is the context set, and $Y$ is the arms set. In this section, all context-arms pairs are feasible: $\mP = X\times Y$.

Adversarial payoffs are defined as follows. For each round $t$, there is a payoff function $\hat{\pi}_t: X\times Y \to [0,1]$ such that
    $\pi_t = \hat{\pi}_t(x_t,y_t)$.
The payoff function $\hat{\pi}_t$ is sampled independently from a time-specific distribution $\Pi_t$ over payoff functions. Distributions $\Pi_t$ are fixed by the adversary in advance, before the first round, and not revealed to the algorithm. Denote
    $\mu_t(x,y) \triangleq \E[\Pi_t(x,y)]$.


Following~\cite{Hazan-colt07}, we generalize the notion of regret for context-free adversarial MAB to contextual MAB. The context-specific best arm is
\begin{align}\label{eq:bechmark-defn}
y^*(x) \in \textstyle{\argmax_{y\in Y}}\;
    \textstyle{\sum_{t=1}^T}\; \mu_t(x, y),
\end{align}
where the ties are broken in an arbitrary but fixed way. We define \emph{adversarial contextual regret} as
\begin{align}
\label{eq:regret-defn-adv}
R(T) \triangleq \textstyle{\sum_{t=1}^T}\;
	 \mu_t(x_t, y_t) - \mu^*_t(x_t) ,
	\quad\text{where}\quad
\mu^*_t(x) \triangleq \mu_t(x, y^*(x)).
\end{align}

Similarity information is given to an algorithm as a pair of metric spaces: a metric space $(X,\D_\TX)$ on contexts (the \emph{context space}) and a metric space $(Y,\D_\TY)$ on arms (the \emph{arms space}), which form the product similarity space
    $(X\times Y, \D_\TX+\D_\TY )$.
We assume that for each round $t$ functions $\mu_t$ and $\mu^*_t$ are Lipschitz on
    $(X\times Y, \D_\TX+\D_\TY )$
and $(X,\D_\TX)$, respectively, both with Lipschitz constant $1$ (see Footnote~\ref{fn:LipConst}). We assume that the context space is compact, in order to ensure that the $\max$ in~\refeq{eq:bechmark-defn} is attained by some $y\in Y$. Without loss of generality, $\mathtt{diameter}(X,\D_\TX)\leq 1$.

Formally, a problem instance consists of metric spaces $(X,\D_\TX)$ and $(Y,\D_\TY)$, the sequence of context arrivals (denoted $\arr$), and a sequence of distributions $(\Pi_t)_{t\leq T}$. Note that for a fixed distribution $\Pi_t=\Pi$, this setting reduces to the stochastic setting, as defined in Introduction. For the fixed context case ($x_t=x$ for all $t$) this setting reduces to the (context-free) MAB problem with a randomized oblivious adversary.

\subsection{Our results}

Our algorithm is parameterized by a regret guarantee for \bandit{} for the fixed context case, namely an upper bound on the convergence time.\footnote{The $r$-convergence time $T_0(r)$ is the smallest $T_0$ such that regret is $R(T)\leq rT$ for each $T\geq T_0$.} For a more concrete theorem statement we will assume that the convergence time of \bandit{} is at most
	$T_0(r) \triangleq c_\TY\, r^{-(2+d_\TY)}\, \log(\tfrac{1}{r})$
for some constants $c_\TY$ and $d_\TY$ that are known to the algorithm. In particular, an algorithm in~\cite{Bobby-nips04} achieves this guarantee if $d_\TY$ is the $c$-covering dimension of the arms space and $c_\TY = O(c^{2+d_\TY})$.

This is a flexible formulation that can leverage prior work on adversarial bandits. For instance, if $Y\subset \R^d$ and for each fixed context $x\in X$ distributions $\Pi_t$ randomize over linear functions
    $\hat{\pi}_t(x,\cdot): Y\to \R$,
then one could take \bandit{} from the line of work on adversarial bandits with linear payoffs. In particular, there exist algorithms with $d_\TY=0$ and $c_\TY = \text{poly}(d)$~\citep{DaniHK-nips07,AbernethyHR-colt08,bubeck-colt12}. Likewise, for convex payoffs there exist algorithms with $d_\TY=2$ and $c_\TY = O(d)$~\citep{FlaxmanKM-soda05}. For a bounded number of arms, algorithm \EXP~\citep{bandits-exp3} achieves $d_\TY=0$ and
    $c_Y = O(\sqrt{|Y|})$.

From here on, the context space $(X,\D_\TX)$ will be only metric space considered; balls and other notions will refer to the context space only.

To quantify the ``goodness" of context arrivals, our guarantees are in terms of the covering dimension of $\arr$ rather than that of the entire context space. (This is the improvement over the guarantee~\refeq{eq:regret-naive} for the \naiveAlg.) In fact, use a more refined notion which allows to disregard a limited number of ``outliers" in $\arr$.

\begin{definition}
Given a metric space and a multi-set $S$, the \emph{$(r,k)$-covering number} of $S$ is the $r$-covering number of the set
	$\{x\in S:\, |B(x,r)\cap S|\geq k\}$.\footnote{By abuse of notation, here $|B(x,r)\cap S|$ denotes the number of points $x\in S$, with multiplicities, that lie in $B(x,r)$.}
Given a constant $c$ and a function $k:(0,1)\to \N$, the {\bf\em relaxed covering dimension} of $S$ with slack $k(\cdot)$ is the smallest $d>0$ such that the $(r, k(r))$-covering number of $S$ is at most $c\,r^{-d}$ for all $r>0$.
\end{definition}

Our result is stated as follows:

\begin{theorem}\label{thm:meta-alg}
Consider the \problem\ with adversarial payoffs, and let \bandit{} be a bandit algorithm. Assume that the problem instance belongs to some class of problem instances such that for the fixed-context case, convergence time of \bandit{} is at most
	$T_0(r) \triangleq c_\TY\, r^{-(2+d_\TY)}\, \log(\tfrac{1}{r})$
for some constants $c_\TY$ and $d_\TY$ that are known to the algorithm. Then \contextBandit{} achieves adversarial contextual regret $R(\cdot)$ such that for any time $T$ and any constant $c_\TX>0$ it holds that
\begin{align}\label{eq:meta-alg}
	R(T) \leq O(\Cdbl^2\, (c_\TX\,c_\TY)^{1/(2+d_\TX+d_\TY)})\;\,
		T^{1-1/(2+d_\TX+d_\TY)} (\log T),
\end{align} 	
where $d_\TX$ is the relaxed covering dimension of $\arr$ with multiplier $c_\TX$ and slack $T_0(\cdot)$, and $\Cdbl$ is the doubling constant of $\arr$.
\end{theorem}

\begin{note}{Remarks.}
For a version of~\refeq{eq:meta-alg} that is stated in terms of the ``raw" $(r,k_r)$-covering numbers of $\arr$, see~\refeq{eq:regret-metaAlg} in the analysis (page~\pageref{eq:regret-metaAlg}).
\end{note}

\subsection{Our algorithm}

\newcommand{\ALG}{\mathtt{ALG}}
The contextual bandit algorithm \contextBandit{} is parameterized by a (context-free) bandit algorithm \bandit{}, which it uses as a subroutine, and a function
	$T_0(\cdot): (0,1)\to\N$.

The algorithm maintains a finite collection $\mA$ of balls, called \emph{active balls}. Initially there is one active ball of radius $1$. Ball $B$ stays active once it is \emph{activated}. Then a fresh instance $\ALG_B$ of \bandit{} is created, whose set of ``arms" is $Y$. $\ALG_B$ can be parameterized by the time horizon $T_0(r)$, where $r$ is the radius of $B$.

The algorithm proceeds as follows. In each round $t$ the algorithm selects one active ball $B\in \mA$ such that $x_t\in B$, calls $\ALG_B$ to select an arm $y\in Y$ to be played, and reports the payoff $\pi_t$ back to $\ALG_B$. A given ball can be selected at most $T_0(r)$ times, after which it is called \emph{full}. $B$ is called \emph{relevant} in round $t$ if it contains $x_t$ and is not full. The algorithm selects a relevant ball (breaking ties arbitrarily) if such ball exists. Otherwise, a new ball $B'$ is activated and selected. Specifically, let $B$ be the smallest-radius active ball containing $x_t$. Then
    $B'=B(x_t,\tfrac{r}{2})$,
where $r$ is the radius of $B$.  $B$ is then called the \emph{parent} of $B'$. See Algorithm~\ref{alg:contextBandit} for the pseudocode.

\begin{algorithm}[h]
\begin{algorithmic}[1]
\caption{Algorithm \contextBandit.}
\label{alg:contextBandit}
\STATE {\bf Input:}
\STATE\TAB Context space $(X,\D_\TX)$ of diameter $\leq 1$, set $Y$ of arms.
\STATE\TAB Bandit algorithm \bandit{} and a function $T_0(\cdot): (0,1)\to\N$.
\STATE {\bf Data structures:}
\STATE\TAB A collection $\mA$ of ``active balls" in $(X,\D_\TX)$.
\STATE\TAB $\forall B\in \mA$: counter $n_B$, instance $\ALG_B$ of \bandit{} on arms $Y$.
\STATE {\bf Initialization:}
\STATE\TAB
    $B\leftarrow B(x, 1)$;~~
    $\mA \leftarrow \{B\}$;~~
    $n_B\leftarrow 0$;~~
    initiate $\ALG_B$.
        \COMMENT{center $x\in X$ is arbitrary}
\STATE\TAB $\mA^* \leftarrow \mA$ \COMMENT{active balls that are not full}
\STATE {\bf Main loop:} for each round $t$\vspace{-0.3mm}
\STATE \TAB Input context $x_t$.
\STATE \TAB $\RELEVANT \leftarrow \{B\in\mA^*:\, x_t\in B\}$.
\STATE \TAB {\bf if}~~$\RELEVANT \neq \emptyset$~~{\bf then}
\STATE \TAB\TAB $B\leftarrow \text{ any } B\in \RELEVANT$.
\STATE \TAB {\bf else} \COMMENT{activate a new ball}:
\STATE \TAB\TAB $r\leftarrow \min_{B\in \mA:\; x_t\in B}\; r_B$.
\STATE \TAB\TAB $B\leftarrow B(x_t,\, r/2)$.
        \COMMENT{new ball to be added}
\STATE \TAB\TAB
    $\mA \leftarrow \mA \cup \{B\}$;~~
    $\mA^* \leftarrow \mA^* \cup \{B\}$;~~
    $n_B\leftarrow 0$;~~
    initiate $\ALG_B$.
\STATE \TAB $y \leftarrow \text{next arm selected by $\ALG_B$}$.
\STATE \TAB Play arm $y$, observe payoff $\pi$, report $\pi$ to $\ALG_B$.
\STATE \TAB $n_B\leftarrow n_B+1$.
\STATE \TAB {\bf if} $n_B = T_0(\mathtt{radius}(B))$ {\bf then}
    $\mA^* \leftarrow \mA^* \setminus \{B\}$.
    \COMMENT{ball $B$ is full}
\end{algorithmic}
\end{algorithm}

\subsection{Analysis: proof of Theorem~\ref{thm:meta-alg}}

First let us argue that algorithm \contextBandit{} is well-defined. Specifically, we need to show that after the activation rule is called, there exists an active non-full ball containing $x_t$. Suppose not. Then the ball
    $B'=B(x_t,\tfrac{r}{2})$
activated by the activation rule must be full. In particular, $B'$ must have been active before the activation rule was called, which contradicts the minimality in the choice of $r$.  Claim proved.

We continue by listing several basic claims about the algorithm.

\begin{claim} \label{cl:metaAlg-basic}
The algorithm satisfies the following basic properties:
\begin{OneLiners}
\item[(a)] (Correctness) In each round $t$, exactly one active ball is selected.
\item[(b)] Each active ball of radius $r$ is selected at most $T_0(r)$ times.
\item[(c)] (Separation) For any two active balls $B(x,r)$ and $B(x',r)$ we have $\D_\TX(x,x')> r$.
\item[(d)] Each active ball has at most $\Cdbl^2$ children, where
$\Cdbl$ is the doubling constant of $\arr$.
\end{OneLiners}
\end{claim}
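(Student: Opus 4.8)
The plan is to verify the four items in order, treating the separation property~(c) as the structural core from which~(d) falls out immediately via the doubling constant.

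\textbf{Items (a) and (b).} First I would record the trivial \emph{covering invariant}: balls are never deactivated, so the initial radius-$1$ ball stays active forever and hence in every round the active balls cover the context space. Consequently, in round $t$ the arrival $x_t$ lies in some active ball; if all active balls containing $x_t$ happen to be full, the activation rule produces a fresh ball $B(x_t,\tfrac r2)$ (of radius $\tfrac r2 \le \tfrac12$), which contains $x_t$ and is not full, since it has been hit $0 < T_0(\tfrac r2)$ times. So after the activation rule there is always a nonempty set of active, non-full balls containing $x_t$; since at most one ball is activated per round, ``birth round'' labels the active balls injectively, so among that nonempty set exactly one was activated last, and that is the unique ball hit --- this is~(a). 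Item~(b) is then immediate from the definitions: a radius-$r$ ball is declared full the instant it has been hit $T_0(r)$ times, and the definition of ``hit'' excludes full balls, so it is hit at most $T_0(r)$ times.

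\textbf{Item (c).} Suppose $B=B(x,r)$ and $B'=B(x',r)$ are both active and, w.l.o.g., $B'$ was activated strictly later, in some round $t$. By the activation rule, $x'=x_t$ and the parent of $B'$ is the \emph{smallest}-radius active ball containing $x_t$ at round $t$, and its radius is $2r$ (its child $B'$ has half that radius). Hence \emph{every} active ball containing $x_t$ at round $t$ has radius at least $2r$. But $B$ has radius $r<2r$ and is already active at round $t$, so $B$ cannot contain $x_t$; that is, $\D_\TX(x,x')=\D_\TX(x,x_t)>r$. (For $r=1$ the statement is vacuous: the initial ball is the only active radius-$1$ ball, there being no radius-$2$ ball available to parent another.)

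\textbf{Item (d).} Fix an active ball $B=B(x,r)$. Each child of $B$ has radius $\tfrac r2$ and is centered at a context arrival $x_t\in B$, hence within distance $r$ of $x$; so, if $B$ is not the root, the set $S'$ consisting of $x$ together with all child-centers is a subset of $\arr$ of diameter at most $2r$. Applying the definition of the doubling constant $\Cdbl=\Cdbl(\arr,\D_\TX)$ with $\delta=\tfrac14$ covers $S'$ by $\Cdbl^{-\log(1/4)}=\Cdbl^2$ sets of diameter at most $\tfrac r2$; by~(c) applied at scale $\tfrac r2$, the child-centers are pairwise more than $\tfrac r2$ apart, so each of those $\Cdbl^2$ sets contains at most one child-center, and $B$ therefore has at most $\Cdbl^2$ children. (When $B$ is the root the child-centers alone form a subset of $\arr$ of diameter at most $1$, and the same argument with $\delta=\tfrac12$ gives the even smaller bound $\Cdbl$.) I expect the one point needing genuine care is the radius bookkeeping in~(c): one must use that the parent of a newly activated radius-$r$ ball has radius \emph{exactly} $2r$ and is \emph{minimal} among active balls containing the new center, so no earlier ball of radius $\le r$ --- in particular no other radius-$r$ ball --- can contain that center. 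Everything else, including the $\delta=\tfrac14$ choice that produces exactly $\Cdbl^2$ in~(d), is routine.
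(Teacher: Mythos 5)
Your proof is correct and follows essentially the same route as the paper's: (a) via the observation that after the activation rule fires there is always an active non-full ball containing $x_t$, (b) directly from the definitions of ``hit'' and ``full'', (c) from the fact that activation only occurs when every active ball containing $x_t$ is full together with the parent's radius being exactly $2r$, and (d) by combining the separation property with the doubling constant of $\arr$. The only differences are presentational (you argue (a) and (c) directly where the paper uses contradiction, and you spell out the $\delta=\tfrac14$ doubling computation that the paper leaves implicit), and your radius bookkeeping in (c) is exactly the point the paper's proof also hinges on.
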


\begin{proof}
Part (a) is immediate from the algorithm's specification. For (b), simply note that by the algorithms' specification a ball is selected only when it is not full.

To prove (c), suppose that $\D_\TX(x,x')\leq r$ and suppose $B(x',r)$ is activated in some round $t$ while $B(x,r)$ is active. Then $B(x',r)$ was activated as a child of some ball $B^*$ of radius $2r$. On the other hand, $x'=x_t \in B(x,r)$, so $B(x,r)$ must have been full in round $t$ (else no ball would have been activated), and consequently the radius of $B^*$ is at most $r$. Contradiction.

For (d), consider the children of a given active ball $B(x,r)$. Note that by the activation rule the centers of these children are points in $\arr \cap B(x,r)$, and by the separation property any two of these points lie within distance $>\tfrac{r}{2}$ from one another. By the doubling property, there can be at most $\Cdbl^2$ such points.
\end{proof}

Let us fix the time horizon $T$, and let $R(T)$ denote the contextual regret of \contextBandit. Partition $R(T)$ into the contributions of active balls as follows. Let $\mB$ be the set of all balls that are active after round $T$. For each $B\in\mB$, let $S_B$ be the set of all rounds $t$ when $B$ has been selected. Then
\begin{align*}
R(T)    = \textstyle{\sum_{B\in \mB}}\; R_B(T),
	\quad\text{where}\quad
R_B(T)  \triangleq \textstyle{\sum_{t\in S_B}}\;
    \mu^*_t (x_t) - \mu_t (x_t,\, y_t).
\end{align*}

\begin{claim}\label{cl:regret-ball}
For each ball $B = B(x,r) \in \mB$,  we have
$R_B \leq 3\,r\, T_0(r)$.
\end{claim}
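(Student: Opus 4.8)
The plan is to decompose the regret $R_B$ contributed by a ball $B = B(x,r)$ into two pieces: the cost of approximating each context arrival $x_t$, $t\in S_B$, by the center $x$ (a ``discretization error''), and the regret incurred by the subroutine $\A_B$ running on the $|S_B|$ rounds that $B$ serves. First I would invoke the Lipschitz property of the benchmark payoff: since $\mu^*_t$ is a Lipschitz function on the context space and $x_t\in B(x,r)$, we have $|\mu^*_t(x_t) - \mu^*_t(x)|\le r$; similarly, since $\mu_t$ is Lipschitz on the similarity space, $|\mu_t(x_t,y_t) - \mu_t(x,y_t)|\le r$. Hence for each $t\in S_B$,
\begin{align*}
\mu^*_t(x_t) - \mu_t(x_t,y_t) \;\le\; \bigl(\mu^*_t(x) - \mu_t(x,y_t)\bigr) + 2r.
\end{align*}
Summing over $t\in S_B$ and using $|S_B|\le T_0(r)$ from Claim~\ref{cl:metaAlg-basic}(b), the ``$+2r$'' terms contribute at most $2r\,T_0(r)$.

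Next I would bound $\sum_{t\in S_B}\bigl(\mu^*_t(x) - \mu_t(x,y_t)\bigr)$, the regret of $\A_B$ as measured against the fixed context $x$. The key point is that from the viewpoint of $\A_B$, the rounds in $S_B$ form a self-contained adversarial MAB instance with support $\F$: in each such round $\A_B$ picks $y_t$ and sees payoff $\pi_t(x_t,y_t)$, whose mean $\mu_t(x_t,y_t)$ is within $r$ of $\mu_t(x,y_t)$. One must be slightly careful that $\A_B$'s nominal benchmark is $\sup_y \sum_{t\in S_B}\mu_t(x_t,y)$ rather than $\sup_y \sum_{t\in S_B}\mu_t(x,y)$, but these two differ by at most $r\,|S_B|$ as well (again by the Lipschitz condition, applied to the fixed optimal arm $y^*(x)$), so this only adds another $r\,|S_B|\le r\,T_0(r)$. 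Since $|S_B|\le T_0(r)$ and $T_0(r)$ is by definition the $r$-convergence time of \bandit{} on instances with support $\F$, the regret guarantee for \bandit{} gives that its own regret on these $|S_B|$ rounds is at most $r\,|S_B|\le r\,T_0(r)$. Adding the three contributions $2r\,T_0(r)$, $r\,T_0(r)$, and (absorbing the benchmark-shift term) yields $R_B\le 3\,r\,T_0(r)$ after collecting constants — I would arrange the bookkeeping so that the accounting comes out to exactly the claimed constant $3$, e.g.\ by folding the Lipschitz slack on $\mu_t(x_t,y_t)$ and on the benchmark into a single ``$\le r$'' term rather than ``$\le 2r$''.

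The main obstacle is purely bookkeeping: making sure the adversarial regret guarantee for \bandit{} is applied to the correct subsequence of rounds and the correct (context-$x$) benchmark, and tracking the three distinct sources of Lipschitz slack (benchmark $\mu^*_t$ at $x_t$ vs.\ $x$; played payoff $\mu_t$ at $(x_t,y_t)$ vs.\ $(x,y_t)$; and the benchmark arm's payoff at $x_t$ vs.\ $x$ summed over $S_B$) so that their total is $\le 2r\,T_0(r)$, which together with the $\le r\,T_0(r)$ from \bandit{}'s convergence-time guarantee gives the bound. No deep idea is needed beyond the Lipschitz conditions and the definition of $T_0(\cdot)$; the only subtlety is that the rounds $S_B$ are interleaved with other balls' rounds in real time, but since $\A_B$ only ever observes and reacts to rounds in $S_B$, it sees a legitimate (if adaptively chosen) adversarial sequence, and the worst-case regret bound for \bandit{} applies verbatim.
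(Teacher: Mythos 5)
Your proposal is correct and follows essentially the same route as the paper's proof: bound the per-round discretization error via the Lipschitz conditions on $\mu_t$ and $\mu^*_t$, and charge the remainder to the regret of the subroutine $\A_B$ against the fixed arm $y^*(x)$, using $|S_B|\le T_0(r)$ from Claim~\ref{cl:metaAlg-basic}(b) together with the definition of convergence time. As you anticipate, the constant $3$ comes out only if you compare directly against $\mu_t(x_t,\,y^*(x))$ (i.e., $\mu^*_t(x_t)\le \mu_t(x_t,\,y^*(x))+2r$ per round, which is what the paper does) rather than routing both the played arm and the benchmark through the center $x$, which as literally written costs an extra $r$--$2r$ per round; also, for $n=|S_B|<T_0(r)$ the convergence-time definition yields the subroutine's regret $\le r\,T_0(r)$ via monotonicity of the regret bound rather than directly $\le r\,n$.
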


\begin{proof}
By the Lipschitz conditions on $\mu_t$ and $\mu^*_t$, for each round $t\in S_B$ it is the case that
\begin{align*}
\mu^*_t(x_t)
     \leq r + \mu^*_t(x)
     = r + \mu_t(x,y^*(x))
     \leq 2rn + \mu_t(x_t,\, y^*(x)).
\end{align*}
The $t$-round regret of \bandit{} is at most
	$R_0(t) \triangleq t\,T_0^{-1}(t)$.
Therefore, letting $n = |S_B|$ be the number of times algorithm $\ALG_B$ has been invoked, we have that
\begin{align*}
R_0(n) + \textstyle{\sum_{t\in S_B}}\;  \mu_t(x_t,\, y_t)
	\geq \textstyle{\sum_{t\in S_B}}\; \mu_t(x_t, y^*(x))
	\geq \textstyle{\sum_{t\in S_B}}\; \mu^*_t(x_t) - 2rn.
\end{align*}
Therefore
	$R_B(T)\leq R_0(n) + 2rn$.
Recall that by Claim~\ref{cl:metaAlg-basic}(b) we have $n\leq T_0(r)$. Thus, by definition of convergence time
	$R_0(n)\leq R_0(T_0(r)) \leq r\, T_0(r)$,
and therefore $R_B(T)\leq 3\,r\,T_0(r)$.
\end{proof}

Let $\F_r$ be the collection of all full balls of radius $r$. Let us bound $|\F_r|$ in terms the $(r,k)$-covering number of $\arr$ in the context space, which we denote $N(r,k)$.


\begin{claim}\label{cl:full-context-balls}
There are at most $N(r,\, T_0(r))$ full balls of radius $r$.
\end{claim}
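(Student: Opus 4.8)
The plan is to show that the centers of all full balls of radius $r$ form a witness set for the $(r, T_0(r))$-covering number of $\arr$. First, recall from Claim~\ref{cl:metaAlg-basic}(c) (separation) that any two active balls of radius $r$ have centers at distance $> r$; in particular this holds for the full balls, so no single set of diameter $r$ can contain two such centers. Second, by Claim~\ref{cl:metaAlg-basic}(b), a full ball $B(x,r)$ is hit exactly $T_0(r)$ times, and each time it is hit the current context arrival lies in $B(x,r)$; hence $|B(x,r) \cap \arr| \geq T_0(r)$, where the intersection counts multiplicities. Therefore every center $x$ of a full ball of radius $r$ belongs to the set $\{x \in \arr : |B(x,r) \cap \arr| \geq T_0(r)\}$ whose $r$-covering number is, by definition, $N(r, T_0(r))$.

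Putting these together: let $C$ be the set of centers of full balls of radius $r$. Then $C$ is a subset of $\{x \in \arr : |B(x,r)\cap \arr| \geq T_0(r)\}$, so $C$ can be covered by $N(r, T_0(r))$ sets of diameter $r$. By the separation property, each such set contains at most one point of $C$, so $|C| \leq N(r, T_0(r))$, i.e. $|\F_r| \leq N(r, T_0(r))$. This is exactly the claim.

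There is essentially no obstacle here beyond matching the definitions carefully — the only subtlety is the bookkeeping with multiplicities in $|B(x,r)\cap \arr|$ (a context value may arrive multiple times, and the activation/hit mechanism counts arrivals, not distinct points), which is consistent with the footnote convention already adopted in the definition of the $(r,k)$-covering number. One should also note that a full ball's center is itself one of the context arrivals (it was activated as $B(x_t, \tfrac{r}{2})$ of its parent at the round it was created, so $x = x_t \in \arr$), which is what lets us place $C$ inside the set over which the $(r, T_0(r))$-covering number is defined.
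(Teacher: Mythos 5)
Your proof is correct and follows essentially the same route as the paper's: show that the center of every full ball of radius $r$ is a ``heavy'' point of $\arr$ (covered by the $N(r,T_0(r))$ sets of diameter $r$), then use the separation property of Claim~\ref{cl:metaAlg-basic}(c) to conclude that each covering set accounts for at most one full ball. Your additional remarks (that hits supply the $T_0(r)$ context arrivals inside the ball, and that the center is itself a context arrival) just make explicit steps the paper treats as immediate.
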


\begin{proof}
Fix $r$ and let $k = T_0(r)$. Let us say that a point $x\in \arr$ is \emph{heavy} if $B(x,r)$ contains at least $k$  points of $\arr$, counting multiplicities. Clearly, $B(x,r)$ is full only if its center is heavy. By definition of the $(r,k)$-covering number, there exists a family $\mathcal{S}$ of $N(r, k)$ sets of diameter $\leq r$ that cover all heavy points in $\arr$. For each full ball $B = B(x,r)$, let $S_B$ be some set in $\mathcal{S}$ that contains $x$. By Claim~\ref{cl:metaAlg-basic}(c), the sets $S_B$, $B\in \F_r$ are all distinct. Thus,
	$|\F_r| \leq |\mathcal{S}| \leq  N(r,k)$.
\end{proof}

Let $\mB_r$ be the set of all balls of radius $r$ that are active after round $T$.
By the algorithm's specification, each ball in $\F_r$ has been selected $T_0(r)$ times, so
	$|\F_r| \leq T/T_0(r)$.
Then using Claim~\ref{cl:metaAlg-basic}(b) and Claim~\ref{cl:full-context-balls}, we have
\begin{align}
|\mB_{r/2}|
	&\leq \Cdbl^2\, |\F_r|
	\leq \Cdbl^2\; \min(T/T_0(r),\; N(r,\, T_0(r))) \nonumber \\
\textstyle{\sum_{B\in \mB_{r/2}}} R_B
	&\leq O(r) \, T_0(r)\,|\mB_{r/2}|
	\leq O(\Cdbl^2)\; \min(rT,\; r\, T_0(r)\, N(r, T_0(r))). \label{eq:regret-metaAlg-r}
\end{align}

\noindent Trivially, for any full ball of radius $r$ we have $T_0(r)\leq T$. Thus, summing~\refeq{eq:regret-metaAlg-r} over all such $r$, we obtain
\begin{align}\label{eq:regret-metaAlg}
R(T) \leq O(\Cdbl^2)\;
		\textstyle{\sum_{r=2^{-i}:\, i\in \N \text{ and } T_0(r)\leq T }}  \;
			 \min(rT,\; r\,T_0(r)\, N(r, T_0(r))).
\end{align}
Note that~\refeq{eq:regret-metaAlg} makes no assumptions on $N(r, T_0(r))$. Now, plugging in
	$T_0(r) = c_\TY\, r^{-(2+d_\TY)}$ and $N(r,T_0(r)) \leq c_\TX\, r^{-d_\TX}$ into~\refeq{eq:regret-metaAlg} and optimizing it for $r$ it is easy to derive the desired bound~\refeq{eq:meta-alg}.

\section{Conclusions}

We consider a general setting for contextual bandit problems where the algorithm is given information on similarity between the context-arm pairs. The similarity information is modeled as a metric space with respect to which expected payoffs are Lipschitz-continuous. Our key contribution is an algorithm which maintains a partition of the metric space and adaptively refines this partition over time. Due to this ``adaptive partition" technique, one can take advantage of ``benign" problem instances without sacrificing the worst-case performance; here ``benign-ness" refers to both expected payoffs and context arrivals. We essentially resolve the setting where expected payoff from every given context-arm pair either does not change over time, or changes slowly. In particular, we obtain nearly matching lower bounds (for time-invariant expected payoffs and for an important special case of slow change).

We also consider the setting of adversarial payoffs. For this setting, we design a different algorithm that maintains a partition of contexts and adaptively refines it so as to take advantage of ``benign" context arrivals (but not ``benign" expected payoffs), without sacrificing the worst-case performance. Our algorithm can work with, essentially, any given off-the-shelf algorithm for standard (non-contextual) bandits, the choice of which can then be tailored to the setting at hand.

The main open questions concern relaxing the requirements on the quality of similarity information that are needed for the provable guarantees. First, it would be desirable to obtain similar results under weaker versions of the Lipschitz condition. Prior work~\citep{LipschitzMAB-stoc08,xbandits-nips08}
obtained several such results for the non-contextual version of the problem, mainly because their main results do not require the full power of the Lipschitz condition. However, the analysis in this paper appears to make a heavier use of the Lipschitz condition; it is not clear whether a meaningful relaxation would suffice. Second, in some settings the available similarity information might not include any numeric upper bounds on the difference in expected payoffs; e.g. it could be given as a tree-based taxonomy on context-arm pairs, without any explicit numbers. Yet, one wants to recover the same provable guarantees \emph{as if} the numerical information were explicitly given. For the non-contextual version, this direction has been explored in ~\citep{Bubeck-alt11,ImplicitMAB-nips11}.%
\footnote{\citep{Bubeck-alt11,ImplicitMAB-nips11} have been published after the preliminary publication of this paper on {\tt arxiv.org}.}

Another open question concerns our results for adversarial payoffs. Here it is desirable to extend our ``adaptive partitions" technique to also take advantage of ``benign" expected payoffs (in addition to ``benign" context arrivals). However, to the best of our knowledge such results are not even known for the non-contextual version of the problem.

\xhdr{Acknowledgements.} The author is grateful to Ittai Abraham, Bobby Kleinberg and Eli Upfal for many conversations about multi-armed bandits, and to Sebastien Bubeck for help with the manuscript. Also, comments from anonymous COLT reviewers and JMLR referees have been tremendously useful in improving the presentation.

\begin{small}
\bibliography{bib-abbrv,bib-bandits,bib-embedding,bib-slivkins}
\end{small}
\appendix

\section{The KL-divergence technique, encapsulated}
\label{app:KL-div}

To analyze the lower-bounding construction in Section~\ref{sec:LBs}, we use an extension of the KL-divergence technique from~\cite{bandits-exp3},  which is implicit in~\cite{Bobby-nips04} and encapsulated as a stand-alone theorem in \cite{LipschitzMAB-merged-arxiv}. To make the paper self-contained, we state the theorem from \cite{LipschitzMAB-merged-arxiv}, along with the relevant definitions. The remainder of this section is copied from \cite{LipschitzMAB-merged-arxiv}, with minor modifications.

\vspace{2mm}

Consider a very general MAB setting where the algorithm is given a strategy set $X$ and a collection $\F$ of feasible payoff functions; we call it the \emph{feasible MAB problem} on $(X,\F)$. For example, $\F$ can consist of all functions $\mu:X\to [0,1]$ that are Lipschitz with respect to a given metric space. The lower bound relies on the existence of a collection of subsets of $\F$ with certain properties, as defined below. These subsets correspond to children of a given tree node in the ball-tree

\begin{definition}\label{def:eps-k-ensemble}
Let $X$ be the strategy set and $\F$ be the set of all feasible payoff functions. An \emph{$(\eps,k)$-ensemble} is
 a collection of subsets $\F_1 \LDOTS \F_k \subset \F$ such that there exist mutually disjoint subsets
        $S_1 \LDOTS S_k \subset X$ and a number $\mu_0 \in [\tfrac13, \tfrac23]$
which satisfy the following. Let
        $S=\cup_{i=1}^k S_i$.
Then
\begin{OneLiners}
\item on $X\setminus S$, any two functions in $\cup_i\, \F_i $ coincide, and are bounded from above by $\mu_0$.
\item for each $i$ and each function $\mu\in \F_i$ it holds that $\mu = \mu_0$ on $S\setminus S_i$ and
    $\sup(\mu_i, S_i) = \mu_0+\eps$.
\end{OneLiners}
\end{definition}

Assume the payoff function $\mu$ lies in $\cup_i\, \F_i$. The idea is that an algorithm needs to play arms in $S_i$ for at least $\Omega(\eps^{-2})$ rounds in order to determine whether $\mu\in\F_i$, and each such step incurs $\eps$ regret if $\mu\not\in\F_i$. In our application, subsets $S_1 \LDOTS S_k$ correspond to children $u_1 \LDOTS u_k$  of a given tree node in the ball-tree, and each $\F_i$ consists of payoff functions induced by the ends in the subtree rooted at $u_i$.

\begin{theorem}[Theorem 5.6 in \cite{LipschitzMAB-merged-arxiv}]
\label{thm:KL-div}
Consider the feasible MAB problem with 0-1 payoffs. Let  $\F_1, \ldots, \F_k$ be an $(\eps,k)$-ensemble,
where $k\geq 2$ and $\eps\in(0,\,\tfrac{1}{12})$. Then for any
    $t \leq \tfrac{1}{32}\, k\,\eps^{-2}$
and any bandit algorithm there exist at least $k/2$ distinct $i$'s such that the regret of this algorithm on any payoff function from $\F_i$ is at least $\tfrac{1}{60}\,\eps t$.
\end{theorem}

In ~\cite{bandits-exp3}, the authors analyzed a special case of an $(\eps,k)$-ensemble in which there are $k$ arms
	$u_1 \LDOTS u_k$,
and each $\F_i$ consists of a single payoff function that assigns expected payoff $\tfrac12+\eps$ to arm $u_i$, and $\tfrac12$ to all other arms.

\end{document}